\newcommand{\EE}{\mathbb{E}}
\newcommand{\argmin}{\mathop{\rm argmin}}
\newtheorem{thm}{Theorem}[section]
\newtheorem{lem}{Lemma}[section]
\newtheorem{prop}{Proposition}[section]
\newtheorem{asmp}{Assumption}[section]
\newtheorem{defn}{Definition}[section]
\newtheorem{rem}{Remark}[section]
\def\approxcorrect{\checkmark\kern-1.1ex\raisebox{.89ex}{$\times$}}
\def\eqref#1{equation~\ref{#1}}
\def\1{\bm{1}}
\DeclareMathAlphabet{\mathsfit}{\encodingdefault}{\sfdefault}{m}{sl}
\SetMathAlphabet{\mathsfit}{bold}{\encodingdefault}{\sfdefault}{bx}{n}
\newcommand{\KL}{D_{\mathrm{KL}}}
\title{Last-iterate Convergence in Regularized Graphon Mean Field Game}
\author{%
Jing Dong \\
  The Chinese University of Hong Kong, Shenzhen\\
  \texttt{jingdong@link.cuhk.edu.cn} \\
   \And
   Baoxiang Wang\\
   The Chinese University of Hong Kong, Shenzhen \\
   \texttt{bxiangwang@cuhk.edu.cn} \\
   \AND
   Yaoliang Yu \\
   University of Waterloo \\
   \texttt{yaoliang.yu@uwaterloo.ca} \\
}
\begin{document}

\maketitle

\begin{abstract}
To model complex real-world systems, such as traders in stock markets, or the dissemination of contagious diseases, graphon mean-field games (GMFG) have been proposed to model many agents. 
Despite the empirical success, our understanding of GMFG is limited. Popular algorithms such as mirror descent are deployed but remain unknown for their convergence properties. In this work, we give the first last-iterate convergence rate of mirror descent in regularized monotone GMFG. In tabular monotone GMFG with finite state and action spaces and under bandit feedback, we show a last-iterate convergence rate of $O(T^{-1/4})$. Moreover, when exact knowledge of costs and transitions is available, we improve this convergence rate to $O(T^{-1})$, matching the existing convergence rate observed in strongly convex games.
In linear GMFG, our algorithm achieves a last-iterate convergence rate of $O(T^{-1/5})$. Finally, we verify the performance of the studied algorithms by empirically testing them against fictitious play in a variety of tasks. 
\end{abstract}
\section{Introduction}

In many real-world applications, the presence of complex systems including many interacting individuals or components is indispensable. These systems manifest in various forms, from the intricate networks of neurons within human brains \citep{bullmore2009complex,bullmore2012economy,avena2018communication}, to the dynamic interactions of traders in stock markets \citep{bakker2010social,bian2016evolving}, and to the dissemination of contagious diseases throughout societies \citep{newman2002spread,pastor2015epidemic}. Due to the large number of interacting individuals or components, these systems pose significant challenges for modeling. Mean field games (MFG) \citep{caines2006large,lasry2007mean} have emerged as a highly effective approach for addressing this complexity, offering both scalability and robust theoretical guarantees in these multi-agent systems. MFG operates on the principle of weak interactions, positing that each individual's influence on the overall system is negligible. This framework has been successfully applied to many real-world tasks, including social networks \citep{yang2018learning}, and swarm robotics \citep{cui2023scalable}.

The MFG framework leverages the assumption of agent homogeneity and has demonstrated success across various applications. However, this assumption becomes a hindrance when dealing with heterogeneous agents. To address this limitation, the Graphon mean field games (GMFG) \cite{parise2019graphon,aurell2022finite} framework has been introduced as an extension of MFGs to accommodate heterogeneous agent modeling. The GMFG framework captures agent interactions through a graphical structure and is shown to be successful in applications like modeling investment decisions in financial markets \citep{tangpi2024optimal}. Despite the empirical success of MFG and GMFG, our theoretical understanding of this framework remains limited. 

In monotone MFGs \citep{lasry2007mean}, and under the access to the exact cost and transition functions, \cite{perrin2020fictitious} proposed a continuous time fictitious play algorithm, where the averaged iterates policy converge to a Nash equilibrium in $O(T^{-1})$ iterations. For discrete-time monotone GMFGs with access to exact cost and transition functions,  \cite{zhang2023learning} proposed a mirror descent-based algorithm that converges to the Nash equilibrium in $O(T^{-1/2})$ iterations. 

However, in real-world applications, approximating continuous-time dynamics can be challenging, and exact knowledge of cost and transition functions may not be feasible. Agents typically only receive bandit feedback, meaning they observe the cost and transitions associated with the states and actions they have visited. Meanwhile, the existing approaches only guarantee the convergence of the time average of the joint action profile, rather than the last-iterate convergence, the convergence of the joint action profile. Last-iterate convergence holds greater appeal as it offers a descriptive account of the evolution of players' overall behavior. In contrast, while the trajectory of players' joint action converge in the time-average sense, it may exhibit cycling, which is not suitable for practical deployment \citep{mertikopoulos2018cycles}. The following question thus arises.
\begin{center}
\textit{How fast can discrete-time algorithms converge (in the last iterate) to a Nash equilibrium in GMFGs with bandit feedback?}
\end{center}

In this work, we focus on the mirror descent-based algorithm, which has been empirically verified to be successful in GMFGs \citep{perolat2022scaling,zhang2023learning}. In tabular monotone GMFGs (finite state and actions space) and under bandit feedback, we show a $O(T^{-1/4})$ last-iterate convergence rate. When the exact knowledge of cost and transitions is present, we show that the convergence rate can be improved to $O(T^{-1})$, matching the existing convergence rate in strongly convex (but not mean field) games. To address scenarios involving large or even infinite state spaces, we extend our analysis to linear GMFGs, where costs and transitions adhere to a linear structure. In this context, we achieve a last-iterate convergence rate of $O(T^{-1/5})$. We validate the effectiveness of the studied algorithm by empirically comparing them against the fictitious play in four different environments. 

\section{Related Works}
\paragraph{Mean Field Game (MFG)}
To address the challenge of modeling a large number of agents in a game, the Mean Field Game (MFG) was proposed by \cite{caines2006large,lasry2007mean}. It considers the limit case of a continuous distribution of homogeneous agents (all anonymous and with symmetric interest) and reduces the problem to the characterization of the optimal behavior of a single representative agent. The classic approaches include the numerical approximation approach for partial differential equation \citep{achdou2010mean,achdou2012mean,achdou2020mean}, and the more recent deep reinforcement learning approaches \citep{cui2021approximately,lauriere2022scalable,fabian2023learning}. 

Recent efforts also introduced the traditional fictitious play (FP) algorithm and combined it with machine learning techniques \citep{perrin2020fictitious}. While FP achieves impressive results and is shown to be convergent \citep{geist2022concave}, it is hard to scale due to its low computational efficiency, as it requires computing the best response at every iteration. To address this, the policy mirror descent algorithm is proposed, and its asymptotic convergence in continuous time is studied \cite{perolat2022scaling}. Under a monotonicity assumption and regularization, the average iterate of the discrete-time mirror descent algorithm is shown to enjoy linear convergence under the tabular case and with function approximation (but with access to an approximation subroutine) \citep{zhang2023learning}.

\paragraph{Graphon Mean Field Game (GMFG)}
To capture the heterogeneity among agents, the Graphon mean field game (GMFG) has been proposed by \cite{parise2019graphon}, where the heterogeneous interaction between agents is described by graphon. By using a contraction condition, \cite{cui2021learning} proposed an algorithm to efficiently approximate the Nash equilibrium. The average iterate of the mirror descent algorithm is then shown to be convergent \citep{fabian2023learning} (asymptotically) and in finite time \citep{zhang2023learning}. To our best knowledge, there is no last-iterate convergence guarantee of algorithms for (graphon) mean field games.

\paragraph{Last-iterate Convergence in Monotone (not Mean Field) Game}
In strongly convex games with full gradient feedback, the linear last-iterate convergence rate is established \citep{tseng1995linear,liang2019interaction,zhou2020convergence}. When the gradient feedback is with a zero-mean noise, \cite{jordan2022adaptive} gave a $O(T^{-1})$ last-iterate convergence rate. When only bandit feedback is available, \cite{bervoets2020learning} established an asymptotic convergence rate if the equilibrium is unique. Subsequently, \cite{bravo2018bandit} improved this convergence rate of $O(T^{-1/3})$, while the proposed algorithm also ensured the no-regret property. Later works by \cite{lin2021doubly} further improved the last-iterate convergence rate to $O(T^{-1/2})$ using the self-concordant barrier function.

\section{Preliminary}
We consider a GMFG defined as $\left( \mathcal{I}, \mathcal{S}, \mathcal{A}, \{P\}_{h \in [H]}, \{c\}_{h \in [H]}, \{W_h\}_{h \in [H]}, \mu_1, H\right)$ with infinitely many agents. Each agent corresponds to a point $\alpha \in \mathcal{I}$. Let $\nu$ be a positive measure on $\mathcal{I}$. The state and action space ($\mathcal{S}$ and $\mathcal{A}$) are the same for each agent. We further assume the state space is compact and the action space is finite. The interaction among agents at time $h$ is captured through graphon $W_h$, a symmetric function such that $W_h(\alpha, \beta) = W_h(\beta, \alpha)$. The transition and reward of each agent are affected by the collective behavior of all other agents by an aggregate $z$. At time $h$, the aggregate for agent $\alpha$ is defined as $z_h^\alpha = \int_\mathcal{I} W_h(\alpha, \beta) \mu_h^\beta d\nu(\beta)$, where $\mu_h^\beta$ is the state distribution of agent $\beta$. We assume each agent has access to the aggregate $z_h^\alpha$. We also let $\mu^\mathcal{I}(s) = \lim_{N\rightarrow\infty} \frac{\sum^N_{j=1}\mathbb{I}\{s_j = s\}}{N}$ denote the state distribution of all agents. On state $s_h^\alpha$ and when the agent takes action $a_h^\alpha$, the state transits according to $s_{h+1}^\alpha \sim P_h(\cdot \mid s_h^\alpha, a_h^\alpha, z_h^\alpha)$. The agent $\alpha$ will also incur a cost of $c_h(s_h^\alpha, a_h^\alpha, z_h^\alpha)$. 

Define the value functions as
\begin{align}\label{eq:VandQ}
    V^{\alpha}_{h}\left(s^\alpha, \pi^\alpha, \mu^{\mathcal{I}}\right) = \ & \EE_{\pi^\alpha} \left[\sum^H_{t=h} c_h\left(s_t^\alpha, a_t^\alpha, z_h^\alpha\left( \mu^{\mathcal{I}}\right)\right) \mid s_{h}^\alpha = s^\alpha\right] \,, \\
    Q^{\alpha}_{h}\left(s^\alpha, a^\alpha, \pi^\alpha, \mu^{\mathcal{I}}\right) = \ &  c_h\left(s^\alpha,a^\alpha, z_{h}^\alpha\right) + \EE_{\pi^\alpha, P_h} \left[V^{\alpha}_{h + 1} \left(s_{h + 1}^\alpha, \pi^\alpha, \mu^{\mathcal{I}}\right)\mid s_{h}^\alpha = s^\alpha, a_h^\alpha = a^\alpha\right] \,.
\end{align}
Following the standard settings studied in GMFG and Markov games, we investigate the convergence with the regularized value function, which enables faster convergence \citep{zhang2023learning,cen2021fast,shani2020adaptive}.
The $\lambda$-regularized value functions are defined as 
\begin{align}\label{eq:lambda_VandQ}
    V^{\lambda, \alpha}_{h}\left(s^\alpha, \pi^\alpha, \mu^{\mathcal{I}}\right) = \ & \EE_{\pi^\alpha} \left[\sum^H_{t=h} c_h\left(s_t^\alpha, a_t^\alpha, z_t^\alpha\left( \mu^{\mathcal{I}}\right)\right) + \lambda \ln \pi_t^\alpha(a_t^\alpha \mid s_t^\alpha) \mid s_{h}^\alpha = s^\alpha\right] \,, \\
    Q^{\lambda, \alpha}_{h}\left(s^\alpha, a^\alpha, \pi^\alpha, \mu^{\mathcal{I}}\right) = \ &  c_h\left(s^\alpha,a^\alpha, z_{h}^\alpha\right) + \EE_{\pi^\alpha, P_h} \left[V^{\lambda, \alpha}_{h + 1} \left(s_{h + 1}^\alpha, \pi^\alpha, \mu^{\mathcal{I}}\right)\mid s_{h}^\alpha = s^\alpha, a_h^\alpha = a^\alpha\right] \,.
\end{align}
Without loss of generality, we further assume the rewards are bounded between $[0,1]$. Then, $\left\|Q_h^{\lambda, \alpha}\left(s^\alpha, \cdot, \pi_t^\alpha, \mu_t^\mathcal{I}\right)\right\|_\infty \leq H$, for any $h, \alpha, s$. We define cumulative reward as 
\begin{align*}
    J^{\alpha}(\pi, \mu_1)=\mathbb{E}_{\mu_1}\left[V_1(s^\alpha, \pi, \mu)\right]\,, \quad J^{\lambda,\alpha}(\pi, \mu_1)=\mathbb{E}_{\mu_1}\left[V_1^\lambda(s^\alpha, \pi, \mu)\right] \,.
\end{align*}

A common solution concept in GMFG is Nash equilibrium, which equilibrium state where no agent can gain in value by unilaterally changing its action. Formally, the Nash equilibrium is defined as follows. 
\begin{defn}[Nash equilibrium]
    An NE of the $\lambda$-regularized MFG is a pair $\left(\pi^{\ast, \mathcal{I}}, \mu^{\ast, \mathcal{I}}\right)$ that satisfies
    \begin{itemize}
        \item Agent rationality: $J^{\lambda, \alpha}\left(\pi^{\ast, \alpha}, \mu^{\ast, \mathcal{I}}\right)=\min _{\widetilde{\pi}^\alpha \in \Pi^H} J^{\lambda, \alpha}\left(\widetilde{\pi}^\alpha, \mu^{\ast, \mathcal{I}}\right)$ for all $\alpha \in \mathcal{I}$ up to a zero measure set on $\mathcal{I}$ with respect to $\nu$. 
        \item Distribution consistency: The distribution flow $\mu^{*, \mathcal{I}}$ is equal to the distribution flow induced by implementing the policy $\pi^{*, \mathcal{I}}$.
    \end{itemize}
\end{defn}

To ensure the existence of a Nash equilibrium in a $\lambda$-regularized GMFG, we maintain the following assumptions of the game.
\begin{asmp}\label{asmp:continuous}
    The GMFG satisfies
    \begin{itemize}
        \item The cost function and the transition function are continuous. 
        \item The graphon is a continuous function.
    \end{itemize}
\end{asmp}

\begin{rem}
The above model also includes games with finitely many players.
For a finite graph, $\mathcal{G} = (V, E)$ with $N$ nodes denoting the agents, and $E$ denotes the set of edges that models the relationship between agents. We can partition a unit interval $[0,1]$ into $N$ intervals, $I_1, \ldots, I_N$ of equal length. Then we can let the graphon $W$ assign a constant value on each square $I_i \times I_j$, $i, j \in V$. It is equal to one if there is an edge between $i, j$ in $\mathcal{G}$, and zero otherwise. Although this is not continuous, one can smooth it so that it is continuous \citep{fabian2023learning}. 
\end{rem}

\begin{thm}[Theorem 4.4 \cite{zhang2023learning}]
    Under Assumption \ref{asmp:continuous}, for all $\lambda \geq 0$, there exists a Nash equilibrium in a $\lambda$-regulairzed GMFG. 
\end{thm}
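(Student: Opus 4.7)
The plan is to prove existence by a fixed-point argument applied to a carefully chosen map on the space of distribution flows. Let $\mathcal{M}$ denote the set of measurable maps $\alpha \mapsto \mu^\alpha = (\mu_1^\alpha,\dots,\mu_H^\alpha)$ where each $\mu_h^\alpha$ is a Borel probability measure on the compact state space $\mathcal{S}$. Equipped with the product weak-$*$ topology (in $h$ and $s$) and $L^1(\nu)$ in the agent index $\alpha$, $\mathcal{M}$ is a convex, compact subset of a locally convex topological vector space. This step uses the standard Prokhorov theorem together with the compactness of $\mathcal{S}$, and exploits the fact that $\mu_1$ is fixed, so one only needs to topologize the flow for $h \geq 2$.

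Next I would define the set-valued map $\Phi \colon \mathcal{M} \rightrightarrows \mathcal{M}$ in two stages. First, given $\mu^\mathcal{I} \in \mathcal{M}$, compute the aggregate $z_h^\alpha(\mu^\mathcal{I}) = \int_\mathcal{I} W_h(\alpha,\beta)\mu_h^\beta d\nu(\beta)$. By continuity of $W_h$ and the dominated convergence theorem, $z_h^\alpha$ depends continuously on $\mu^\mathcal{I}$ and on $\alpha$. Second, define the best-response correspondence $\mathrm{BR}^\lambda(\mu^\mathcal{I}) = \{\pi^\mathcal{I} : \pi^\alpha \in \arg\min_{\widetilde{\pi}^\alpha} J^{\lambda,\alpha}(\widetilde{\pi}^\alpha, \mu^\mathcal{I})\ \forall \alpha\}$. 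For $\lambda > 0$, strict convexity of the regularizer makes this best response unique and given by the softmax $\pi_h^{*,\alpha}(a\mid s) \propto \exp(-Q_h^{\lambda,\alpha}(s,a,\pi^{*,\alpha},\mu^\mathcal{I})/\lambda)$, whose continuity in $\mu^\mathcal{I}$ follows from continuity of $c_h, P_h, W_h$ combined with a backward induction through the Bellman equations. For $\lambda = 0$, upper hemi-continuity follows from Berge's maximum theorem; in either case the values are non-empty, convex, and compact. Finally set $\Phi(\mu^\mathcal{I})$ to be the flow obtained by pushing $\mu_1$ forward through $P_h(\cdot\mid s,a,z_h^\alpha)$ under some $\pi^\mathcal{I} \in \mathrm{BR}^\lambda(\mu^\mathcal{I})$.

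I would then verify that $\Phi$ has a closed graph with non-empty convex compact values, which reduces to showing: (i) the induced-flow operator is continuous in $(\pi^\mathcal{I}, z^\mathcal{I})$, again by continuity of $P_h$ and bounded convergence; (ii) convexity is preserved, which uses that the set of induced flows is convex once $\mathrm{BR}^\lambda$ is convex-valued (this requires a small technical argument: mixing policies may not mix flows, so one instead convexifies by considering mixtures over $\pi^\mathcal{I}$ directly in the distributional step). With these properties in hand, the Kakutani-Fan-Glicksberg fixed-point theorem on the locally convex space containing $\mathcal{M}$ yields a $\mu^{*,\mathcal{I}} \in \Phi(\mu^{*,\mathcal{I}})$, which together with the corresponding best response $\pi^{*,\mathcal{I}}$ satisfies both agent rationality and distribution consistency.

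The main obstacle I expect is the verification of joint continuity of the maps in the graphon setting, since $\mathcal{I}$ carries a continuum of agents and the policy space is infinite-dimensional in $\alpha$. Care is needed to ensure that measurability in $\alpha$ is preserved under the selection from $\mathrm{BR}^\lambda$ (one invokes a measurable selection theorem such as Kuratowski-Ryll-Nardzewski) and that the weak-$*$ topology on flows interacts correctly with the $L^1(\nu)$-integration used to compute the aggregate. Once these measure-theoretic details are handled, the remaining steps are essentially the standard MFG existence argument adapted to the graphon interaction kernel $W_h$.
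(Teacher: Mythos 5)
This statement is imported verbatim from Zhang et al.\ (2023, Theorem 4.4); the present paper states it without proof, so there is no in-paper argument to compare against. Your sketch follows the standard existence route for (graphon) mean field games --- compactify the space of distribution flows via Prokhorov, build the best-response-then-induced-flow correspondence, verify closed graph and convex compact values, and invoke Kakutani--Fan--Glicksberg --- which is essentially the strategy of the cited reference, so the approach is the right one.

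One point deserves more care than your parenthetical remark gives it: convexity of $\Phi(\mu^{\mathcal{I}})$. Even when $\mathrm{BR}^\lambda(\mu^{\mathcal{I}})$ is convex in policy space, the policy-to-flow map is multilinear (a product of transition kernels over $h$), so the image of a convex set of policies need not be a convex set of flows; ``convexifying in the distributional step'' has to be made precise, typically by reformulating the best-response set in terms of state-action occupation measures, for which both convexity and the induced-flow consistency are linear constraints. Note also that for $\lambda>0$ the issue evaporates: the entropic regularizer makes the best response a unique softmax policy, $\Phi$ becomes a single-valued continuous map, and Schauder's theorem suffices; the occupation-measure convexification is only genuinely needed at $\lambda=0$. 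With that repair, and the measurable-selection step you already flag for the agent index $\alpha$, the sketch is sound.
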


To ensure the uniqueness of Nash equilibrium, we further maintain the following weakly monotonicity assumption of the game. The following condition is a generalization of the monotonicity condition in games \citep{lin2020finite,lin2021doubly,duvocelle2023multiagent}, and is commonly seen in literature in GMFG \citep{zhang2023learning}. 

\begin{asmp}[Weakly monotone condition]\label{asmp:monotone}
A GMFG is said to be weakly monotone if for any $\rho^{\mathcal{I}}, \tilde{\rho}^{\mathcal{I}} \in \Delta(\mathcal{S} \times \mathcal{A})^{\mathcal{I}}$ and their marginalizations on the states $\mu^{\mathcal{I}}, \tilde{\mu}^{\mathcal{I}} \in \Delta(\mathcal{S})^{\mathcal{I}}$, we have
$
\int_{\mathcal{I}} \sum_{a \in \mathcal{A}} \int_{\mathcal{S}}\left(\rho^\alpha(s, a)-\widetilde{\rho}^\alpha(s, a)\right)\left(c_h\left(s, a, z_h^\alpha\left(\mu^{\mathcal{I}}\right)\right)-c_h\left(s, a, z_h^\alpha\left(\widetilde{\mu}^{\mathcal{I}}\right)\right)\right) \mathrm{d} s \mathrm{~d} \nu(\alpha) \geq 0 
$,
for all $t$. It is strictly weakly monotone if the inequality is strict when $\rho^{\mathcal{I}} \neq \widetilde{\rho}^{\mathcal{I}}$.
\end{asmp}

When a $\lambda$-regularized GMFG admits Assumption \ref{asmp:monotone}, there exists a unique Nash equilibrium and satisfies the following property. An example of such a weakly monotone game is the multi-population predator-prey model described in \citep{perolat2022scaling}.

\begin{prop}[Proposition 5.3 of \cite{zhang2023learning}]\label{prop:monotone}
If a $\lambda$-regularized GMFG satisfies the weakly monotone condition, then for any two policies $\pi^{\mathcal{I}}, \widetilde{\pi}^{\mathcal{I}} \in \widetilde{\Pi}$ and their induced distribution flows $\mu^{\mathcal{I}}, \widetilde{\mu}^{\mathcal{I}} \in \widetilde{\Delta}$, we have
$
\int_\mathcal{I} J^{\lambda, \alpha}\left(\pi^\alpha, \mu^{\mathcal{I}}\right)+J^{\lambda, \alpha}\left(\widetilde{\pi}^\alpha, \tilde{\mu}^{\mathcal{I}}\right)-J^{\lambda, \alpha}\left(\widetilde{\pi}^\alpha, \mu^{\mathcal{I}}\right)-J^{\lambda, \alpha}\left(\pi^\alpha, \widetilde{\mu}^{\mathcal{I}}\right) \mathrm{d} \nu(\alpha) \geq 0 
$.

If the $\lambda$-regularized GMFG satisfies the strictly weakly monotone condition, then the inequality is strict when $\pi^{\mathcal{I}} \neq \widetilde{\pi}^{\mathcal{I}}$.
\end{prop}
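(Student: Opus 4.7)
The plan is to re-express each $J^{\lambda,\alpha}$ via state--action occupation measures, then split the four-term expression into a cost-coupling part that the weakly monotone condition controls and an entropy part handled by non-negativity of $\mathrm{KL}$-divergences.

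First, I write
\[
J^{\lambda,\alpha}(\pi^\alpha, \mu^\mathcal{I}) = \sum_{h=1}^H \int_\mathcal{S}\sum_{a\in\mathcal{A}}\rho^\alpha_h(\pi^\alpha, \mu^\mathcal{I})(s,a)\bigl[c_h(s,a,z^\alpha_h(\mu^\mathcal{I})) + \lambda\ln\pi^\alpha_h(a|s)\bigr]\,ds,
\]
where $\rho^\alpha_h(\pi^\alpha, \mu^\mathcal{I})$ is the state--action distribution of agent $\alpha$ at step $h$ under policy $\pi^\alpha$ and transitions $P_h(\cdot|\cdot,\cdot,z^\alpha_h(\mu^\mathcal{I}))$. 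The consistency of $\mu$ with $\pi$ and of $\tilde\mu$ with $\tilde\pi$ yields the clean factorisations $\rho^\alpha_h(\pi,\mu)(s,a)=\mu^\alpha_h(s)\pi^\alpha_h(a|s)$ and likewise $\rho^\alpha_h(\tilde\pi,\tilde\mu)(s,a)=\tilde\mu^\alpha_h(s)\tilde\pi^\alpha_h(a|s)$ on the diagonal.

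Second, I invoke the entropy-regularized performance-difference lemma in the single-agent MDP determined by a fixed aggregate. In the $\mu$-MDP, comparing $\pi^\alpha$ to $\tilde\pi^\alpha$ gives, summed over $h$,
\[
J^{\lambda,\alpha}(\pi,\mu) - J^{\lambda,\alpha}(\tilde\pi,\mu) = \sum_h \mathbb{E}_{s\sim d^{\pi^\alpha,\mu}_h}\!\bigl[\langle \pi^\alpha_h(\cdot|s) - \tilde\pi^\alpha_h(\cdot|s),\,Q^{\lambda,\alpha}_h(s,\cdot,\tilde\pi,\mu)\rangle + \lambda\,\mathrm{KL}(\pi^\alpha_h(\cdot|s)\,\|\,\tilde\pi^\alpha_h(\cdot|s))\bigr],
\]
and symmetrically in the $\tilde\mu$-MDP with the roles of $\pi$ and $\tilde\pi$ swapped. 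Summing the two identities, integrating over $\alpha\in\mathcal{I}$, and using consistency $d^{\pi^\alpha,\mu}_h=\mu^\alpha_h$ and $d^{\tilde\pi^\alpha,\tilde\mu}_h=\tilde\mu^\alpha_h$, the $\mathrm{KL}$ contributions combine into $\lambda\int_\mathcal{I}[\mathrm{KL}(\pi^\alpha\|\tilde\pi^\alpha)+\mathrm{KL}(\tilde\pi^\alpha\|\pi^\alpha)]\,d\nu(\alpha)\geq 0$. The $Q$-difference contributions, after expanding $Q$ through one Bellman step and telescoping over $h$, collapse to the cost-coupling form
\[
\sum_{h}\int_\mathcal{I}\!\int_\mathcal{S}\!\sum_{a}\bigl(\mu^\alpha_h(s)\pi^\alpha_h(a|s) - \tilde\mu^\alpha_h(s)\tilde\pi^\alpha_h(a|s)\bigr)\bigl(c_h(s,a,z^\alpha_h(\mu^\mathcal{I})) - c_h(s,a,z^\alpha_h(\tilde\mu^\mathcal{I}))\bigr)\,ds\,d\nu(\alpha),
\]
which is non-negative by Assumption~\ref{asmp:monotone} applied with $\rho^\alpha=\mu^\alpha\pi^\alpha$ and $\tilde\rho^\alpha=\tilde\mu^\alpha\tilde\pi^\alpha$. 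Adding the two non-negative contributions yields the desired inequality, with strictness inherited under the strict version of the assumption whenever $\pi^\mathcal{I}\neq\tilde\pi^\mathcal{I}$.

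The hardest step will be the telescoping of the $Q$-differences in the presence of population-dependent transitions, since a cross term such as $J^{\lambda,\alpha}(\tilde\pi,\mu)$ involves an agent following $\tilde\pi$ under the $\mu$-aggregate, whose state marginal is neither $\mu^\alpha$ nor $\tilde\mu^\alpha$. Making those mismatched marginals cancel so that the leftover is exactly $\mu^\alpha_h\pi^\alpha_h - \tilde\mu^\alpha_h\tilde\pi^\alpha_h$ is precisely where the consistency of each pair is used, and it has to proceed by a careful backward induction on $h$ rather than a one-shot identity; fortunately, the induction only needs the Bellman equation of the consistent pair together with the tower property, and the population-dependent transitions never appear outside the consistent legs $(\pi,\mu)$ and $(\tilde\pi,\tilde\mu)$ once the reduction is complete.
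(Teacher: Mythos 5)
The paper never proves this proposition---it is imported verbatim from \cite{zhang2023learning}---so your argument has to stand on its own, and two of its steps fail as written. The first is the entropy bookkeeping. Your performance-difference identity omits the $\lambda\ln\tilde\pi^\alpha_h$ that must sit inside the inner product next to $Q^{\lambda,\alpha}_h(s,\cdot,\tilde\pi,\mu)$: the correct bracket is $\langle Q^{\lambda,\alpha}_h+\lambda\ln\tilde\pi^\alpha_h,\,\pi^\alpha_h-\tilde\pi^\alpha_h\rangle+\lambda\,\mathrm{KL}(\pi^\alpha_h\|\tilde\pi^\alpha_h)$, and when you symmetrize, the extra terms $\lambda\langle\ln\tilde\pi^\alpha_h,\pi^\alpha_h-\tilde\pi^\alpha_h\rangle$ and $\lambda\langle\ln\pi^\alpha_h,\tilde\pi^\alpha_h-\pi^\alpha_h\rangle$ exactly cancel the symmetrized KL you are keeping as a non-negative bonus. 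Indeed the regularizer must contribute nothing: expanding each $J^{\lambda,\alpha}$ in occupation measures, $\lambda\langle\mu^\alpha_h\pi^\alpha_h,\ln\pi^\alpha_h\rangle$ enters with a plus sign through $J^{\lambda,\alpha}(\pi,\mu)$ and a minus sign through $J^{\lambda,\alpha}(\pi,\tilde\mu)$ (and likewise for $\tilde\pi$), so the four-term expression is independent of $\lambda$. Your accounting would yield the lower bound $\lambda\int_\mathcal{I}\sum_h[\mathrm{KL}(\pi^\alpha_h\|\tilde\pi^\alpha_h)+\mathrm{KL}(\tilde\pi^\alpha_h\|\pi^\alpha_h)]\,d\nu(\alpha)>0$ whenever $\pi^\mathcal{I}\neq\tilde\pi^\mathcal{I}$, which is false: take $c_h$ independent of $z$ and the four-term expression vanishes identically while the proposition only claims non-strict inequality under weak monotonicity.

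The second, more serious issue is that the ``hardest step'' you defer does not resolve the way you assert. The cross terms $J^{\lambda,\alpha}(\tilde\pi,\mu)$ and $J^{\lambda,\alpha}(\pi,\tilde\mu)$ carry occupation measures $d^{\tilde\pi,\mu}_h\tilde\pi_h$ and $d^{\pi,\tilde\mu}_h\pi_h$, and the four-term sum collapses to $\sum_h\int_\mathcal{I}\int_\mathcal{S}\sum_a(\mu^\alpha_h\pi^\alpha_h-\tilde\mu^\alpha_h\tilde\pi^\alpha_h)(c_h(s,a,z^\alpha_h(\mu^\mathcal{I}))-c_h(s,a,z^\alpha_h(\tilde\mu^\mathcal{I})))\,ds\,d\nu(\alpha)$ only if $d^{\tilde\pi,\mu}_h=\tilde\mu^\alpha_h$ and $d^{\pi,\tilde\mu}_h=\mu^\alpha_h$, i.e., only if the state distribution induced by a policy does not depend on the aggregate. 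When $P_h(\cdot\mid s,a,z)$ genuinely depends on $z$, the residuals $(\mu^\alpha_h-d^{\pi,\tilde\mu}_h)\pi^\alpha_h\,[c_h(\cdot,z^\alpha_h(\tilde\mu^\mathcal{I}))+\lambda\ln\pi^\alpha_h]$ and their mirror image have no sign, and no Bellman telescoping removes them, because Assumption~\ref{asmp:monotone} constrains only the costs and says nothing about the transitions. The result is proved in \cite{zhang2023learning} (and throughout the Lasry--Lions monotonicity literature) under the standing hypothesis that the transition kernel is population-independent; under that hypothesis your argument simplifies to a one-line occupation-measure expansion with exact cancellation of the entropy terms and a direct application of Assumption~\ref{asmp:monotone} with $\rho^\alpha=\mu^\alpha\pi^\alpha$ and $\tilde\rho^\alpha=\tilde\mu^\alpha\tilde\pi^\alpha$, with no performance-difference lemma or backward induction needed. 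You should either invoke that hypothesis explicitly or explain why the residual terms vanish in your setting.
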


\section{Algorithm}
In this section, we introduce our algorithm for solving Nash equilibrium in regularized GMFG. Our algorithm extends the celebrated mirror-descent algorithm, for which its efficiency in solving Nash equilibrium has been demonstrated. The last-iterate convergence properties of mirror descent has been investigated in many works \citep{cen2021fast,lin2021doubly,cai2023uncoupled,duvocelle2023multiagent}. 

At each iteration $t$, the agent $\alpha$ execute $\{\pi_{t,h}^\alpha\}^H_{h=1}$ for $H$ steps and receive costs $\{c_h(s_h^\alpha, a_h^\alpha, z_h^\alpha)\}^H_{h=1}$. Then, dependent on the information available, the agent computes a gradient $\hat{g}_{t,h}(s^\alpha, \cdot)$ and updates it with a mirror descent step. Algorithm \ref{alg} provides a summary of our algorithm. 

\begin{algorithm}
    \KwInput{Learning rate $\{\eta_t\}^T_{t=1}$, regularization constant $\lambda$}
    \For{$t = 1, \ldots, T$}{
    \For{$h = H, \ldots, 1$}{
    Execute $\pi_{t,h}^\alpha$ and receive costs $c_h(s_h^\alpha, a_h^\alpha, z_h^\alpha)$\;
    }
    \For{$h = H, \ldots, 1$}{
    Compute gradient $\hat{g}_{t,h}(s^\alpha_h, \cdot)$ according to Equation (\ref{eq:tabular_approximation}), or (\ref{eq:linear_approx_Q})\;
    
    $\pi_{t+1, h}^\alpha(\cdot \mid s_h^\alpha) = \arg\min_{\pi^\alpha} \eta_t \left\langle  \hat{g}_{t,h}(s_h^\alpha, \cdot) +\lambda\log(\pi_t^\alpha), \pi^\alpha(\cdot \mid s_h^\alpha)\right\rangle + \KL\left(\pi^\alpha (\cdot \mid s_h^\alpha), \pi_{t,h}^\alpha(\cdot \mid s_h^\alpha)\right)$\;
    
    }
    }
    \caption{Tabular online mirror descent for $\lambda$-regularized GMFG}\label{alg}
\end{algorithm}
We now discuss how one can construct the gradient estimator in a full information setting, tabular bandit feedback setting, and linear GMFG setting. 
\paragraph{Tabular GMFG with full information feedback}
When the cost function and the transition kernel are known to the agent, the agent can set $\hat{g}_{t,h}(s^\alpha, \cdot) = Q^{\lambda, \alpha}_{t,h}\left(s^\alpha,\cdot, \pi^\alpha_{t,h}, \mu^{\mathcal{I}}\right)$, which can be computed via value iteration,  
\begin{align*}
    Q^{\lambda, \alpha}_{t,h}\left(s^\alpha,\cdot, \pi^\alpha_{t}, \mu^{\mathcal{I}}\right)
    = \ & c_h\left(s^\alpha, \cdot, z_{h}^\alpha\right) + P_h\left(s^\alpha, \cdot, z_h^\alpha\right) V^{\lambda, \alpha}_{h + 1}\,,\\
    V^{\lambda, \alpha}_{t, h} \left(s^\alpha, \pi^\alpha_{t}, \mu^{\mathcal{I}}\right) = \ & \left\langle Q^{\lambda, \alpha}_{h}\left(s^\alpha,\cdot, \pi^\alpha_{t,h}, \mu^{\mathcal{I}}\right),  \pi_{t,h}^\alpha(\cdot \mid s^\alpha) \right\rangle \,.
\end{align*}

\paragraph{Tabular GMFG with bandit feedback}
Under the tabular bandit feedback model, the agent does not have exact knowledge of the cost and transition kernel. Instead, they can only observe the cost corresponding to the state, action, and state distribution that they have visited. In this case, we compute the gradient as follows, which is similar to the gradient estimator on multi-agent tabular Markov games \citep{jin2022v}. Let $k = N_{t,h}(s)$ be the number of times state $s$ is visited at step $h$ up to time t. Then we first approximate the value function as 
\begin{align}
    \hat{V}^{\lambda,\alpha}_{h}\left(s_h^\alpha, \pi_t^\alpha, \mu^\mathcal{I}\right) 
    = \ &  (1-\beta_k)\hat{V}^{\lambda,\alpha}_{h}\left(s_h^\alpha,\pi_t^\alpha, \mu^\mathcal{I}\right) + \beta_k \left(c_h(s_h^\alpha, a_h^\alpha, z_h^\alpha) + \hat{V}^{\lambda,\alpha}_{t,h+1}\left(s_{h+1}^\alpha, \pi_t^\alpha, \mu^\mathcal{I}\right)\right) \\
    \hat{V}^{\lambda,\alpha}_{h}\left(s_h^\alpha, \pi_t^\alpha, \mu^\mathcal{I}\right)
    = \ & \max\left\{H+1-h, \hat{V}^{\lambda,\alpha}_{h}\left(s_h^\alpha, \pi_t^\alpha, \mu^\mathcal{I}\right)\right\} \,.
\end{align}
As we have no access to the cost associated with each action, we estimate the gradient with respect to all actions using importance sampling
\begin{align}\label{eq:tabular_approximation}
    \hat{g}_{t,h}(s_h^\alpha, a^\alpha) = \ & \frac{\mathbb{I}\{a_h^\alpha = a^\alpha\}\left(c_h + \hat{V}^{\lambda,\alpha}_{t,h+1}\left(s_{h+1}^\alpha, \pi_t^\alpha, \mu^\mathcal{I}\right)\right)}{\pi_{t,h}^\alpha(a^\alpha \mid s_h^\alpha) + \gamma_t} \,.
\end{align}
To avoid unbounded gradient estimation and to encourage exploration, we add a $\gamma_t$ factor. For the initialization step, we set $\hat{V}(s, \pi_0^\alpha, \mu^\mathcal{I}) = 0$, for all $s$.

\paragraph{Linear GMFG with bandit feedback}
When the state and action space are large, maintaining a tabular value function may become infeasible. Therefore, we consider learning within a linearly parameterized GMFG, which extends the tabular GMFG and accommodates the potential enormity of state and action spaces.
\begin{defn}[Linear GMFG]
    A linear GMFG has a linearly structured transition $P_h(\cdot \mid s_h^\alpha, a_h^\alpha, z_h^\alpha) = \theta_h^\ast \phi(s_h^\alpha, a_h^\alpha, z_h^\alpha)\,,\quad \forall h, s_h^\alpha \in \mathcal{S}, a_h^\alpha\in \mathcal{S}$, where $\phi$ is a known feature mapping. Further, we assume 
    \begin{enumerate}
        \item $\sup_{s,a,z}\|\phi(s,a,z)\|_2 \leq 1$, and
        \item $\|v^\top \theta_h^\ast \| \leq \sqrt{d}$, for any $\|v\|_\infty \leq 1$ and all $h$.
    \end{enumerate}
\end{defn}
Given the linear structure of the transition kernel, it remains to estimate $\theta_h$ accurately to compute the value function. Let $\delta_h(s^\alpha)$ be a one-hot vector that has zero everywhere except that the entry corresponding to $s^\alpha$ is one, and denote $\epsilon_h^\alpha = P_h(\cdot \mid s_h^\alpha, a_h^\alpha, z_h^\alpha) - \delta_h(s_{h+1}^\alpha)$. Conditioned on the history generated on all previous episodes up to episode $t$, $\mathcal{H}_{t,h}$, we have $\EE[\epsilon_h^\alpha \mid \mathcal{H}_{t,h}] = 0$. Therefore $\delta$ acts as an unbiased estimate of $P_h$. 

At iteration $t$, step $h$, we consider using all previous interactions $D_{t,h} = \{s_{j,h}^\alpha, a_{j,h}^\alpha, z_{j,h}^\alpha\}^{t-1}_{j=1}$to estimate the $\theta_h^\ast$. Once we have an estimate $\hat{\theta}_h$, we can use value iteration to compute the value function. With the dataset $D_{t,h}$ one could estimate $P_h$ using ridge regression
\begin{align*}
    \hat{\theta}_{t,h} = \argmin_{\theta_h} \sum^{t-1}_{j=1}\left\|\theta_h \phi\left(s_{j,h}^\alpha, a_{j,h}^\alpha, z_{j,h}^\alpha\right) - \delta_h \left(s_{j,h+1}^\alpha\right)\right\| + \|\theta_h\|^2 \,,
\end{align*}
for which the closed-form solution is
\begin{align}\label{eq:close_form}
    \hat{\theta}_{t,h} = \ & \sum^{t-1}_{j=1}\delta_h \left(s_{j,h+1}^\alpha\right)\phi\left(s_{j,h}^\alpha, a_{j,h}^\alpha, z_{j,h}^\alpha\right)^\top \left(\Lambda_t\right)^{-1}\,, \\
    \Lambda_t = \ & \sum^{t-1}_{j=1} \phi\left(s_{j,h}^\alpha, a_{j,h}^\alpha, z_{j,h}^\alpha\right)\phi\left(s_{j,h}^\alpha, a_{j,h}^\alpha, z_{j,h}^\alpha\right)^\top + I \,.
\end{align}
Using the estimate, we then update the value function using value iteration.
\begin{align}
    \hat{Q}^{\lambda, \alpha}_{t,h}\left(s^\alpha_h, a^\alpha_h, \pi^\alpha_t, \mu^{\mathcal{I}}\right) = \ &  c_h\left(s^\alpha_h,a^\alpha_h, z_{h}^\alpha\right) + \phi(s^\alpha, a^\alpha)^\top \hat{\theta}_{t,h}^\top V^{\lambda, \alpha}_{h + 1} \left(s_{h + 1}^\alpha, \pi^\alpha_t, \mu^{\mathcal{I}}\right)\,.
\end{align}
Similar to the tabular case, as we only have information on the state and action we have visited, we estimate the gradient using importance sampling,
\begin{align}\label{eq:linear_approx_Q}
    \hat{g}_{t,h}(s_h^\alpha, a^\alpha) = \ &  \frac{\mathbb{I}\{a_h^\alpha = a^\alpha\}\hat{Q}^{\lambda, \alpha}_{t,h}\left(s^\alpha_h, a^\alpha_h, \pi^\alpha_t, \mu^{\mathcal{I}}\right) }{\pi_{t,h}^\alpha(s_h^\alpha) + \gamma_t} \,.
\end{align}

\section{Convergence analysis}
In this section, we present our main results on the last-iterate convergence to the Nash equilibrium in a regularized GMFG. To measure the distance to the equilibrium, we use the following convergence metric.  
\paragraph{Convergence metric}

Define 
\begin{align*}
    D\left(\pi^\mathcal{I}_t\right) = \int_\mathcal{I} \sum^H_{h=1} \EE_{\mu_h^{\ast, \alpha}} \left[\KL\left( \pi^{\ast, \alpha}(\cdot \mid s_h^\alpha) , \pi_t^{\alpha}(\cdot \mid s_h^\alpha)\right)\right] d\nu(\alpha) \,.
\end{align*}
Note that this measures the weighted KL divergence between the policy computed and the Nash equilibrium, where the weights are the Nash equilibrium distribution flow $\mu^\ast$. At equilibrium, the metric is zero. We also note that this is used in \cite{zhang2023learning}.
\paragraph{Tabular GMFG with full information feedback}
When we have access to the exact cost and transition function, Theorem \ref{thm:exact_Q} shows we can converge linearly. 
\begin{thm}\label{thm:exact_Q}
Let $\eta_t = t^{-1}$. We have $D\left(\pi^\mathcal{I}_{t+1}\right)
    \leq  \frac{H^3}{\lambda t}$.
\end{thm}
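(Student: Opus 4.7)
The plan is to derive a one-step contraction for $D(\pi^{\mathcal{I}}_t)$ of the form $D(\pi^{\mathcal{I}}_{t+1})\le (1-\eta_t\lambda)\,D(\pi^{\mathcal{I}}_t) + C\,H^3 \eta_t^2$, which under $\eta_t = 1/t$ telescopes into the advertised $H^3/(\lambda t)$ bound by induction. Two structural ingredients drive the contraction: the $\lambda$-regularization, which endows the soft Bellman operator with strong-convexity-like curvature in the KL geometry, and the weak monotonicity of Proposition \ref{prop:monotone}, which will neutralize the only troublesome cross term.

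I would start from the first-order optimality of the mirror-descent update. Applied pointwise at each $(\alpha,h,s)$ and tested against $\pi^{\ast,\alpha}_h(\cdot\mid s)$, it gives the three-point inequality $\eta_t\langle Q_{t,h}^{\lambda,\alpha}(s,\cdot)+\lambda\log\pi_{t,h}^\alpha(\cdot\mid s),\,\pi_{t+1,h}^\alpha(\cdot\mid s)-\pi_h^{\ast,\alpha}(\cdot\mid s)\rangle\le \KL(\pi^\ast,\pi_t)-\KL(\pi^\ast,\pi_{t+1})-\KL(\pi_{t+1},\pi_t)$. Because $\pi^{\ast,\alpha}$ is the soft-Bellman-optimal policy at the NE mean field $\mu^{\ast,\mathcal{I}}$, we have $Q^{\lambda,\ast,\alpha}_h(s,a)+\lambda\log\pi_h^{\ast,\alpha}(a\mid s)=V^{\lambda,\ast,\alpha}_h(s)$ on the support of $\pi_h^{\ast,\alpha}$, so $\langle Q^{\lambda,\ast,\alpha}_h+\lambda\log\pi^{\ast,\alpha}_h,\pi-\pi^{\ast,\alpha}_h\rangle=0$ for every probability vector $\pi$. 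Subtracting this zero on the left and invoking the identity $\langle \log\pi_t-\log\pi^\ast,\pi_{t+1}-\pi^\ast\rangle = \KL(\pi^\ast,\pi_t)+\KL(\pi_{t+1},\pi^\ast)-\KL(\pi_{t+1},\pi_t)$ produces the per-coordinate bound $\KL(\pi^\ast,\pi_{t+1}) + \eta_t\lambda\,\KL(\pi_{t+1},\pi^\ast)\le (1-\eta_t\lambda)\KL(\pi^\ast,\pi_t) - (1-\eta_t\lambda)\KL(\pi_{t+1},\pi_t) - \eta_t\langle Q_t^\lambda - Q^{\lambda,\ast},\pi_{t+1}-\pi^\ast\rangle$.

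Next I would take the expectation with respect to $\mu^{\ast,\alpha}_h$, sum over $h\in [H]$, and integrate over $\alpha$. Splitting $\pi_{t+1}-\pi^\ast = (\pi_{t+1}-\pi_t)+(\pi_t-\pi^\ast)$ inside the last inner product, the $(\pi_t-\pi^\ast)$ piece is, after a standard performance-difference rewrite, exactly the quantity that Proposition \ref{prop:monotone} guarantees to be nonnegative when applied to the occupancy measures induced by $(\pi_t,\mu^\ast)$ versus $(\pi^\ast,\mu^\ast)$; this term therefore only helps. The remaining $(\pi_{t+1}-\pi_t)$ piece is bounded using $\|Q_{t,h}^{\lambda,\alpha}\|_\infty\le H$, Pinsker's inequality $\|\pi_{t+1}-\pi_t\|_1\le \sqrt{2\,\KL(\pi_{t+1},\pi_t)}$, and Young's inequality, which absorb it into the $(1-\eta_t\lambda)\,\KL(\pi_{t+1},\pi_t)$ slack at the price of $O(H^2\eta_t^2)$ per stage $h$; summing over $h$ yields the $O(H^3\eta_t^2)$ residual. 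Dropping the nonnegative $\eta_t\lambda\,\KL(\pi_{t+1},\pi^\ast)$ term on the left then gives the claimed contraction, and induction on $t$ with $\eta_t=1/t$ closes the constants to $D(\pi^{\mathcal{I}}_{t+1})\le H^3/(\lambda t)$.

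The most delicate part of the argument will be translating the monotonicity condition of Proposition \ref{prop:monotone}, which is stated as an integral over occupancy measures, into the statement that $\int\sum_h \EE_{\mu^{\ast,\alpha}_h}\langle Q_t^\lambda-Q^{\lambda,\ast},\pi_t-\pi^\ast\rangle\, d\nu(\alpha)\ge 0$. This requires rebuilding a performance-difference identity that splits the value gap into a cross term of this form plus a nonnegative remainder consisting of KL divergences generated by the $\lambda$-regularization. A secondary but more mechanical challenge is tracking the powers of $H$ cleanly through the Pinsker/Young step so that the final constant is exactly $H^3$ rather than a larger polynomial.
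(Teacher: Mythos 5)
Your overall skeleton---the three-point inequality for the mirror step, a performance-difference rewrite, Proposition \ref{prop:monotone} plus optimality to kill the value-gap term, and Pinsker/Young to absorb the $\pi_{t+1}-\pi_t$ residue into $O(H^2\eta_t^2)$ per stage---is the same as the paper's. The difference is \emph{where} you extract the $(1-\eta_t\lambda)$ contraction, and that is where a genuine gap appears. The paper keeps the full gradient $F_{t,h}=Q^{\lambda,\alpha}_h(\cdot,\pi_t,\mu_t)+\lambda\log\pi_{t,h}$ intact, applies the performance-difference identity to $\sum_h\EE_{\mu^{\ast,\alpha}_h}\langle F_{t,h},\pi^\ast-\pi_t\rangle$, and reads off $-\lambda\sum_h\EE_{\mu^{\ast,\alpha}_h}[\KL(\pi^\ast,\pi_t)]$ --- the forward KL, i.e.\ exactly $D$ --- as the by-product of the entropy bookkeeping; monotonicity then disposes of the remaining $J^{\lambda,\alpha}(\pi^{\ast,\alpha},\mu_t^{\mathcal I})-J^{\lambda,\alpha}(\pi_t^{\alpha},\mu_t^{\mathcal I})$ and nothing is left over. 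You instead subtract the NE gradient $F^\ast_h=Q^{\lambda,\ast}_h+\lambda\log\pi^\ast_h$ (legitimate, since it is constant in $a$) and harvest the contraction from $\lambda\langle\log\pi_t-\log\pi^\ast,\pi_{t+1}-\pi^\ast\rangle$, leaving the cross term $\langle Q^{\lambda}_{t,h}-Q^{\lambda,\ast}_h,\pi_t-\pi^\ast\rangle$ to be handled by monotonicity.

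That last step does not go through. Writing $Q_t-Q^\ast=(F_t-F^\ast)-\lambda(\log\pi_t-\log\pi^\ast)$ and using $\langle F^\ast,\pi^\ast-\pi_t\rangle=0$, one finds
\[
\sum_{h}\EE_{\mu^{\ast,\alpha}_h}\bigl\langle Q_t-Q^\ast,\ \pi^\ast-\pi_t\bigr\rangle
=\Bigl(V^{\lambda,\alpha}_1(\pi^{\ast,\alpha},\mu_t^{\mathcal I})-V^{\lambda,\alpha}_1(\pi_t^{\alpha},\mu_t^{\mathcal I})\Bigr)
+\lambda\sum_{h}\EE_{\mu^{\ast,\alpha}_h}\bigl[\KL(\pi_{t,h},\pi^\ast_h)\bigr],
\]
because the $-\lambda\KL(\pi^\ast,\pi_t)$ coming from the performance-difference identity is cancelled by $\lambda\langle\log\pi^\ast-\log\pi_t,\pi^\ast-\pi_t\rangle=\lambda[\KL(\pi^\ast,\pi_t)+\KL(\pi_t,\pi^\ast)]$. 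After integrating over $\alpha$, the first bracket is indeed $\le 0$ by Proposition \ref{prop:monotone} and optimality, but you are left with $+\eta_t\lambda\sum_h\EE_{\mu^{\ast,\alpha}_h}[\KL(\pi_{t,h},\pi^\ast_h)]$: a \emph{reverse} KL, with the wrong sign, weighted by the $\pi^\ast$-occupancy. It is not absorbed by the $-\eta_t\lambda\KL(\pi_{t+1},\pi^\ast)$ you drop (wrong time index, and carrying a reverse-KL state would break the recursion), and it is not controlled by $D(\pi_t)$, which is a forward KL. The slack that optimality actually provides, $J^{\lambda,\alpha}(\pi^{\ast,\alpha},\mu^{\ast,\mathcal I})-J^{\lambda,\alpha}(\pi_t^{\alpha},\mu^{\ast,\mathcal I})=-\lambda\,\EE[\sum_h\KL(\pi_{t,h},\pi^\ast_h)]$ under the occupancy of $\pi_t$, is the right object but under the wrong measure, so it does not cancel the leftover. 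The repair is simply not to split: pair the whole gradient $F_t$ against $\pi^\ast-\pi_t$ and let the performance-difference identity deliver the forward-KL contraction directly, as the paper does. (Your remaining steps, including the $O(H^2\eta_t^2)$ absorption and the $\eta_t=1/t$ induction, are fine modulo constants.)
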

We first fix a tuple $h, \alpha, s_h^\alpha$. To obtain Theorem \ref{thm:exact_Q}, we first use the mirror descent update rule to obtain the following relationship on $\KL\left( \pi^{\ast, \alpha}(\cdot \mid s_h^\alpha), \pi_{t+1,h}^{\alpha}(\cdot \mid s_h^\alpha)\right) $, 
\begin{align*}
    & \KL\left( \pi^{\ast, \alpha}(\cdot \mid s_h^\alpha) , \pi_{t+1,h}^{\alpha}(\cdot \mid s_h^\alpha)\right) \\
     \leq \ & \KL\left( \pi^{\ast, \alpha}(\cdot \mid s_h^\alpha) , \pi_{t,h}^{\alpha}(\cdot \mid s_h^\alpha)\right) + \frac{\eta_t^2H^2}{2} + \eta_t \left\langle \pi^{\ast, \alpha}(\cdot \mid s_h^\alpha) - \pi_{t,h}^{\alpha}(\cdot \mid s_h^\alpha)  , \right. \\
     & \ \left.Q^{\lambda, \alpha}_{h}\left(s_h^\alpha,\cdot, \pi^\beta_t, \mu^{\mathcal{I}}\right) +\lambda\log(\pi_{t,h}^{\alpha}(\cdot \mid s_h^\alpha)) \right\rangle  \,.
\end{align*}
We then use the third term and the monotonicity condition to obtain a recursion on $\KL\left( \pi^{\ast, \alpha}(\cdot \mid s_h^\alpha) , \pi_{t+1,h}^{\alpha}(\cdot \mid s_h^\alpha)\right) $. The observation is that the $\lambda$ parameter acts as a regularization for making the game more convex. Under $\lambda$ regularization, notice that $Q^{\lambda, \alpha}_{h}\left(s_{h}^\alpha,\cdot, \pi^{\alpha} , \mu^{\mathcal{I}}_t \right) + \lambda\log(\pi^{\alpha}(\cdot \mid s_h^\alpha))$ is the gradient. One can then show that 
\begin{align*}
    & \EE_{\pi^{\ast, \alpha}, \mu^{\mathcal{I}}_t }\left[\sum^H_{h=1}  \left\langle Q^{\lambda, \alpha}_{h}\left(s_{h}^\alpha,\cdot, \pi^{\alpha} , \mu^{\mathcal{I}}_t \right) + \lambda\log(\pi^{\alpha}(\cdot \mid s_h^\alpha)), \pi_{h}^{\ast,\alpha}(\cdot \mid s_h^\alpha) - \pi_{h}^{\alpha}(\cdot \mid s_h^\alpha)\right\rangle  \mid s_1^\alpha = s^\alpha\right] \\
    \leq \ & \left(V^{\lambda, \alpha}_{1}\left(s^\alpha, \pi^{\ast, \alpha}, \mu^{\mathcal{I}}_t \right) -V^{\lambda, \alpha}_{1}\left(s^\alpha, \pi^{\alpha}, \mu^{\mathcal{I}}_t \right) \right)\\
    & \ -\lambda  \EE_{\pi^{\ast, \alpha}, \mu^{\mathcal{I}}_t }\left[\sum^H_{h=1} \KL\left( \pi^{\ast, \alpha}(\cdot \mid s_h^\alpha) , \pi_{h}^{\alpha}(\cdot \mid s_h^\alpha)\right) \mid s_1^\alpha = s^\alpha\right] \,.
\end{align*}
Using the monotonicity condition, one can show that the first term is non-positive. Taking summation over $H$ and integrating over all agents, we have
\begin{align*}
& \int_\mathcal{I}\sum^H_{h=1} \EE_{\mu^{\ast,\alpha}_h}\left[\KL\left( \pi^{\ast, \alpha}(\cdot \mid s_h^\alpha) , \pi_{t+1,h}^{\alpha}(\cdot \mid s_h^\alpha)\right)   \right] d\nu(\alpha) \\
\leq \ & \left(1-\eta_t\lambda \right)\int_\mathcal{I}\sum^H_{h=1} \EE_{\mu^{\ast,\alpha}_h}\left[\KL\left( \pi^{\ast, \alpha}(\cdot \mid s_h^\alpha) , \pi_{t,h}^{\alpha}(\cdot \mid s_h^\alpha)\right)\right] d\nu(\alpha)  + \eta_t^2H^3 \,.
\end{align*}
The $\lambda$-regularization can also be interpreted as it regularize the game to be strongly convex with respect to KL divergence. As a result, one can anticipate the algorithm's convergence rate to be akin to established methods for strongly monotone games, often converging at a linear rate \citep{lin2020finite,cen2021fast,jordan2022adaptive}.

\paragraph{Tabular GMFG with bandit feedback}
We now consider the case where we only observe the cost $c_h(s,a,z)$ for the state, action, and state distribution that we have visited. In this case, we use importance sampling with implicit exploration (Eq.\ref{eq:tabular_approximation}) to estimate the gradient. We show that our algorithm then achieves the following last-iterate guarantee. 
\begin{thm}\label{thm:approx_Q}
Take $\eta_t = \frac{1}{t^{3/4}}$, $\gamma = \frac{1}{t^{1/4}}$, $\beta_t = \frac{H+1}{H+t}$. We have
    \begin{align*}
    D\left(\pi^\mathcal{I}_{t+1}\right) \leq  O \left(\frac{A\log(t)}{ t^{3/4}} + \frac{\sqrt{\log(A/\delta)}}{t^{3/4}} + \frac{H^3}{t^{1/4}} + \frac{\sqrt{H^3\iota}}{t^{1/4}} + \frac{\log(1/\delta)}{t^{1/2}}\right)\,.
\end{align*}
\end{thm}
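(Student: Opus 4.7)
}
The plan is to mirror the argument of Theorem \ref{thm:exact_Q}, but carefully absorb the three new sources of error that arise under bandit feedback: (i) the implicit-exploration bias from the $\gamma_t$ shift in the denominator of the importance-weighted estimator (\ref{eq:tabular_approximation}); (ii) the variance of the importance-weighted estimator; and (iii) the estimation error of the running value estimate $\hat V^{\lambda,\alpha}_h$. The starting point is identical: for any fixed $(\alpha,h,s_h^\alpha)$, the mirror-descent update together with the three-point identity for KL yields
\begin{align*}
\KL\bigl(\pi^{\ast,\alpha}(\cdot\mid s_h^\alpha),\pi_{t+1,h}^\alpha(\cdot\mid s_h^\alpha)\bigr)
\leq\ & \KL\bigl(\pi^{\ast,\alpha}(\cdot\mid s_h^\alpha),\pi_{t,h}^\alpha(\cdot\mid s_h^\alpha)\bigr) + \tfrac{\eta_t^2}{2}\|\hat g_{t,h}(s_h^\alpha,\cdot)+\lambda\log\pi_{t,h}^\alpha\|_\infty^2\\
& + \eta_t\bigl\langle \pi^{\ast,\alpha}-\pi_{t,h}^\alpha,\ \hat g_{t,h}(s_h^\alpha,\cdot)+\lambda\log\pi_{t,h}^\alpha\bigr\rangle.
\end{align*}

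Next I would add and subtract the exact regularized gradient $Q^{\lambda,\alpha}_h$ inside the inner product and split the cross-term into a \emph{signal} piece (involving $Q^{\lambda,\alpha}_h+\lambda\log\pi_{t,h}^\alpha$) and three \emph{error} pieces. The signal piece is handled exactly as in the proof of Theorem \ref{thm:exact_Q}: after taking expectation under $\mu_h^{\ast,\alpha}$, summing over $h$, and integrating over $\alpha$, the performance-difference lemma combined with the weak monotonicity in Proposition \ref{prop:monotone} lets me pass from the regret-type quantity to a strongly-contracting recursion with contraction factor $(1-\eta_t\lambda)$, at the cost of one extra $-\lambda \,D(\pi_t^{\mathcal I})$ term.

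For the error pieces I would:
\textbf{(a)} Bound the implicit-exploration bias term $\EE[\hat g_{t,h}-Q^{\lambda,\alpha}_{t,h}]$; by the standard EXP3-IX-style identity the expected bias at coordinate $a$ is at most $\gamma_t Q^{\lambda,\alpha}_{t,h}(s,a)/(\pi_{t,h}^\alpha(a\mid s)+\gamma_t)$ in absolute value, producing an $O(\eta_t\gamma_t H)$ contribution per step.
\textbf{(b)} Bound the martingale deviation $\langle\pi^{\ast,\alpha}-\pi_{t,h}^\alpha,\hat g_{t,h}-\EE\hat g_{t,h}\rangle$ using the standard IX-style concentration inequality (Freedman's inequality applied to the IX estimator), which gives the $\sqrt{\log(A/\delta)}/t^{3/4}$ and $\log(1/\delta)/t^{1/2}$ terms.
\textbf{(c)} Bound $|\hat V^{\lambda,\alpha}_{t,h}-V^{\lambda,\alpha}_h|$ using the Polyak-style averaging weights generated by $\beta_k=(H+1)/(H+k)$; this is the V-learning analysis of \cite{jin2022v}, which yields a high-probability bound of order $\sqrt{H^3\iota/N_{t,h}(s)}$, and in turn an average-case error of order $\sqrt{H^3\iota}/t^{1/4}$ after summing over $h$ using $\mathbb E[N_{t,h}(s)]\gtrsim t$ at visited states.
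The squared-gradient penalty $\eta_t^2\|\hat g_{t,h}+\lambda\log\pi_{t,h}^\alpha\|_\infty^2$ is at most $O(\eta_t^2 A^2/\gamma_t^2)$ by the bounded IX estimator, contributing the $A\log(t)/t^{3/4}$ term.

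Combining everything, I obtain the one-step recursion
\begin{align*}
D(\pi_{t+1}^{\mathcal I}) \le (1-\eta_t\lambda)\, D(\pi_t^{\mathcal I}) + \eta_t^2\,\tfrac{A^2}{\gamma_t^2} + \eta_t\gamma_t H^3 + \eta_t\sqrt{H^3\iota}/\sqrt{t} + \text{(concentration)}.
\end{align*}
Plugging the prescribed $\eta_t=t^{-3/4}$, $\gamma_t=t^{-1/4}$, $\beta_t=(H+1)/(H+t)$ and unrolling the recursion (a standard lemma on $a_{t+1}\le(1-c/t^{3/4})a_t+\text{poly}(t)$) delivers the stated rate. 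The main technical obstacle I anticipate is item (c): coupling the V-learning concentration (which is indexed by per-state visit counts $N_{t,h}(s)$) with the global $D(\pi_t^{\mathcal I})$ recursion, because $N_{t,h}(s)$ is random and correlated with the policy iterates. I would resolve this by conditioning on a uniform-in-$t$ lower bound $N_{t,h}(s)\gtrsim \gamma_t\cdot t$ for states $s$ in the support of $\mu_h^{\ast,\alpha}$ — a consequence of the implicit-exploration floor $\gamma_t$ — and applying a union bound in $t$ to absorb the remaining $\log(1/\delta)$ factor.
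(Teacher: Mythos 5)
Your proposal follows essentially the same route as the paper's proof: the same mirror-descent three-point inequality, the same decomposition of the cross term into a monotonicity-controlled signal piece plus IX bias, martingale deviation, and value-estimation error, the same V-learning-style analysis of $\hat V^{\lambda,\alpha}_h$ under $\beta_k=(H+1)/(H+k)$, and the same unrolling of the $(1-\eta_t\lambda)$ recursion with the prescribed step sizes. The only substantive difference is bookkeeping — the paper obtains the $A\log(t)$ term from the IX bias sum (its Lemma on $\sum_i c_i\langle\pi_i,g_i-\hat g_i\rangle$) rather than from the squared-gradient penalty, which instead yields the $H^3$ term via $\|\hat g\|_\infty\le H/\gamma_t$ — and your flagged concern about coupling per-state visit counts to the global recursion is a real subtlety that the paper's value-convergence lemma handles only by a coarse induction.
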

We first fix a tuple $h, \alpha, s_h^\alpha$. Then, similar to Theorem \ref{thm:exact_Q}, using the mirror descent update, we can obtain the following relation on $\KL\left( \pi^{\ast, \alpha}(\cdot \mid s_h^\alpha) , \pi_{t+1,h}^{\alpha}(\cdot \mid s_h^\alpha)\right)$.
\begin{align*}
    & \KL\left( \pi^{\ast, \alpha}(\cdot \mid s_h^\alpha) , \pi_{t+1,h}^{\alpha}(\cdot \mid s_h^\alpha)\right) \\
     \leq \ & \KL\left( \pi^{\ast, \alpha}(\cdot \mid s_h^\alpha) , \pi_{t,h}^{\alpha}(\cdot \mid s_h^\alpha)\right) + \frac{\eta_t^2H^2}{2\gamma_t^2} \\
     & \ + \eta_t \left\langle \pi^{\ast, \alpha}(\cdot \mid s_h^\alpha) -\pi_{t,h}^{\alpha}(\cdot \mid s_h^\alpha) , Q^{\lambda, \alpha}_{h}\left(s_h^\alpha,\cdot, \pi^\beta_t, \mu^{\mathcal{I}}\right) +\lambda\log(\pi_{t,h}^{\alpha}(\cdot \mid s_h^\alpha)) \right\rangle  \\
     & \ + \eta_t \left\langle \pi^{\ast, \alpha}(\cdot \mid s_h^\alpha) -\pi_{t,h}^{\alpha}(\cdot \mid s_h^\alpha) , \hat{g}_{t,h}(s_h^\alpha, \cdot)- Q^{\lambda, \alpha}_{h}\left(s_h^\alpha,\cdot, \pi^\beta_t, \mu^{\mathcal{I}}\right) \right\rangle \,.
\end{align*}

As we use gradient estimation $\hat{g}_{t,h}$ instead of the exact gradient, we would need to characterize the estimation error $ \hat{g}_{t,h}(s_h^\alpha, \cdot)- Q^{\lambda, \alpha}_{h}\left(s_h^\alpha,\cdot, \pi^\beta_t, \mu^{\mathcal{I}}\right) $ to utilize our proof outline for Theorem \ref{thm:exact_Q}. This gradient estimation error can be further refined to the estimation error of the value function $\hat{V}^{\lambda,\alpha}_{h+1}\left(s_{h+1}^\alpha, \pi_t^\alpha, \mu^\mathcal{I}\right) - V^{\lambda,\alpha}_{h+1}\left(s_{h+1}^\alpha, \pi_t^\alpha, \mu^\mathcal{I}\right)$.

Using the update rule, we can upper bound the error as
\begin{align*}
        \hat{V}^{\lambda,\alpha}_{t,h}\left(s_h^\alpha, \pi_t^\alpha, \mu^\mathcal{I}\right) - V^{\lambda,\alpha}_{h}\left(s_h^\alpha, \pi_t^\alpha, \mu^\mathcal{I}\right) 
        \leq \ & \sum^k_{i=1}\EE_{\pi_{t^i}^\alpha}\left[ \beta_k^i \left(\left(P_h - \hat{P}_h^{t_i}\right)V^{\lambda,\alpha}_{h+1}\left( \pi_{t^i}^\alpha, \mu^\mathcal{I}\right)(s_h^\alpha,a^\alpha)\right)\right] \\
        & \ + \beta_k^i\left(\hat{V}^{\lambda,\alpha}_{t^i,h+1}\left(s^{t^i}_{h+1}, \pi_{t^i}^\alpha, \mu^\mathcal{I}\right) - V^{\lambda,\alpha}_{h+1}\left(s^{t^i}_{h+1}, \pi_{t^i}^\alpha, \mu^\mathcal{I}\right)\right)\,.
    \end{align*}
    Subsequently, using a martingale concentration inequality allows us to upper bound the first term at the order of $O \left(\sqrt{1/t}\right)$. Through an induction argument and leveraging the choice of learning rate $\beta_t$, we show that $\hat{V}^{\lambda,\alpha}_{t,h}\left(s_h^\alpha, \pi_t^\alpha, \mu^\mathcal{I}\right) - V^{\lambda,\alpha}_{h}\left(s_h^\alpha, \pi_t^\alpha, \mu^\mathcal{I}\right)$ is also upper bounded by $O \left(\sqrt{1/t}\right)$.

    Having characterized the estimation error, we then follow the same proof outline as outlined in Theorem \ref{thm:exact_Q} and obtain Theorem \ref{thm:approx_Q} by carefully choosing the parameters. 

\paragraph{Linear GMFG with bandit feedback}
Under a linearly parameterized GMFG, we use ridge regression to estimate the model parameter $\theta_h^\ast$ (Equation (\ref{eq:close_form})). Subsequently, we utilize value iteration alongside importance sampling with exploration to estimate the gradient (Equation (\ref{eq:linear_approx_Q})). Our algorithm provides the following convergence guarantee for the last iterate.
\begin{thm}\label{thm:linear_Q}
    Take $\eta_t = \frac{1}{t^{4/5}}, \gamma_t = \frac{1}{t^{1/5}}$. We have
    \begin{align*}
         D\left(\pi^\mathcal{I}_{t+1}\right)\leq O \left(\frac{H^3}{t^{3/5}} + \frac{\sqrt{\lambda}d^2H^2\log(t+1/d\tau)}{t^{1/5}} + \frac{A\log(t)}{ t^{1/5}} + \sqrt{\frac{\log(A/\delta)}{t^{2/5}}}\right) \,.
    \end{align*}
\end{thm}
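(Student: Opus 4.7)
The plan is to follow the overall architecture used for Theorem \ref{thm:approx_Q}, replacing the empirical value estimate with the ridge-regression-based $\hat Q$ from Equation (\ref{eq:linear_approx_Q}) and absorbing the change into the gradient-error term. Fixing a tuple $h,\alpha,s_h^\alpha$, the first step is to invoke the mirror descent one-step inequality to obtain
\begin{align*}
\KL\!\left(\pi^{\ast,\alpha}(\cdot\mid s_h^\alpha),\pi_{t+1,h}^{\alpha}(\cdot\mid s_h^\alpha)\right)
\le \ & \KL\!\left(\pi^{\ast,\alpha}(\cdot\mid s_h^\alpha),\pi_{t,h}^{\alpha}(\cdot\mid s_h^\alpha)\right)+\tfrac{\eta_t^2 H^2}{2\gamma_t^2}\\
 & +\eta_t\!\left\langle \pi^{\ast,\alpha}\!-\!\pi_{t,h}^{\alpha},\,Q_{h}^{\lambda,\alpha}+\lambda\log\pi_{t,h}^{\alpha}\right\rangle\\
 & +\eta_t\!\left\langle \pi^{\ast,\alpha}\!-\!\pi_{t,h}^{\alpha},\,\hat g_{t,h}-Q_{h}^{\lambda,\alpha}\right\rangle,
\end{align*}
which is identical in form to the one derived for the tabular bandit proof. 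The exact-gradient inner-product term can be handled verbatim from Theorem \ref{thm:exact_Q}: summing over $h$, integrating over $\alpha$, and applying the performance-difference-style identity together with Assumption \ref{asmp:monotone} produces a $(1-\eta_t\lambda)$ contraction on $D(\pi_t^{\mathcal{I}})$ plus an unavoidable $\eta_t^2 H^3$ slack.

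The heart of the proof is controlling the estimation-error inner product $\langle \pi^{\ast,\alpha}-\pi_{t,h}^{\alpha},\,\hat g_{t,h}-Q_h^{\lambda,\alpha}\rangle$ in the linear model. I would decompose it into three pieces: (i) the importance-sampling bias induced by $\gamma_t$, which contributes $O(A\gamma_t H)$ after summing over actions just as in the tabular case; (ii) the pointwise error of $\hat Q_{t,h}-Q_h^{\lambda,\alpha}$ arising from $\hat\theta_{t,h}$; and (iii) the recursive propagation of the error at step $h+1$ into $\hat V_{h+1}^{\lambda,\alpha}$ inside the Bellman backup. For (ii) I would invoke the self-normalized martingale bound of Abbasi-Yadkori, yielding with probability at least $1-\delta$
\begin{equation*}
\bigl\|\hat\theta_{t,h}-\theta_h^\ast\bigr\|_{\Lambda_t}\le C\sqrt{d\log\!\bigl((t+1)/(d\tau\delta)\bigr)},
\end{equation*}
so that at any query $(s_h^\alpha,a^\alpha,z_h^\alpha)$ the pointwise error is bounded by $\|V_{h+1}^{\lambda,\alpha}\|_\infty\cdot\|\phi\|_{\Lambda_t^{-1}}\cdot\|\hat\theta_{t,h}-\theta_h^\ast\|_{\Lambda_t}$, which is of order $\sqrt{d}H\cdot d\sqrt{\log t}/\sqrt{t}$ after using $\|\phi\|_{\Lambda_t^{-1}}\le 1$ and the assumed bound $\|v^\top\theta_h^\ast\|\le\sqrt d$. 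Piece (iii) is handled by a reverse induction on $h$ from $H$ down to $1$: the error at step $h$ inherits the error at step $h{+}1$ up to a $P_h$-expectation, and because the induced Markov chains are uniformly bounded the accumulated prefactor is at most $H$.

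Plugging the estimation error back into the mirror-descent recursion gives a one-step inequality of the form
\begin{equation*}
D(\pi_{t+1}^{\mathcal{I}})\le(1-\eta_t\lambda)D(\pi_t^{\mathcal{I}})+\eta_t^2 H^3+\tfrac{\eta_t^2 H^2}{\gamma_t^2}+\eta_tA\gamma_t H+\eta_t\,\varepsilon_t^{\mathrm{ridge}}+\eta_t\,\xi_t,
\end{equation*}
where $\varepsilon_t^{\mathrm{ridge}}=O(\sqrt{\lambda}\,d^2H^2\log((t+1)/(d\tau))/\sqrt{t})$ captures the linear-regression error and $\xi_t$ is a bounded martingale difference coming from the stochastic sampling, controlled by Azuma/Freedman. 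Unrolling this recursion with $\eta_t=t^{-4/5}$ and $\gamma_t=t^{-1/5}$ balances the variance term $\eta_t H^2/\gamma_t^2=H^2 t^{-2/5}$ against the exploration bias $A\gamma_t H=AHt^{-1/5}$, and makes the ridge error compound into the stated $t^{-1/5}$ rate; the martingale contribution integrates to the $\sqrt{\log(A/\delta)}\,t^{-1/5}$ high-probability term. The main obstacle I anticipate is item (iii): the recursive error propagation through the value iteration is complicated by the fact that $\hat\theta_{t,h}$ is built from non-i.i.d. trajectories whose distribution depends on all previous policies, so turning the per-step ridge-regression confidence bound into a uniform bound on $\hat V_{h}^{\lambda,\alpha}-V_{h}^{\lambda,\alpha}$ requires a careful filtration argument and an $H$-indexed union bound that does not blow up the stated $d^2 H^2$ prefactor.
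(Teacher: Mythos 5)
Your overall architecture matches the paper's: the same mirror-descent one-step inequality with the $\eta_t^2H^2/(2\gamma_t^2)$ slack, the same split of the gradient error into an importance-sampling/exploration bias (handled by Lemmas \ref{lem:omd_est_error1} and \ref{lem:omd_est_error2}, giving the $A\log(t)/t^{c_\gamma}$ and $\sqrt{\log(A/\delta)}$ terms), a self-normalized confidence bound for $\hat\theta_{t,h}$ (the paper's Lemma \ref{lem:8_7_agarwal}), and the monotonicity/contraction unrolling. However, there is one genuine gap at the heart of the argument: you claim the pointwise error $\|V_{h+1}\|_\infty\cdot\|\phi\|_{\Lambda_t^{-1}}\cdot\|\hat\theta_{t,h}-\theta_h^\ast\|_{\Lambda_t}$ is of order $d^{3/2}H\sqrt{\log t}/\sqrt{t}$ ``after using $\|\phi\|_{\Lambda_t^{-1}}\le 1$.'' That inference does not hold. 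The bound $\|\phi\|_{\Lambda_t^{-1}}\le 1$ only gives a \emph{non-decaying} pointwise error of order $dH\sqrt{\log t}$; a pointwise $1/\sqrt{t}$ decay would require $\lambda_{\min}(\Lambda_{t,h})\gtrsim t$, i.e.\ a uniform-coverage or persistence-of-excitation assumption that is neither assumed nor available under bandit feedback, where the data distribution depends on the evolving policies.

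The paper closes this gap differently, and this is the step you are missing: it keeps $\|\phi(s^\alpha_{k,h},a^\alpha_{k,h},z^\alpha_{k,h})\|_{\Lambda_{t,h}^{-1}}$ \emph{inside} the weighted sum over iterations, applies Jensen and Cauchy--Schwarz to separate $\sum_k \eta_k\omega_t^k\,\mathbb{E}[\|\phi_k\|_{\Lambda^{-1}}]$ into $\bigl(\sum_k(\eta_k\omega_t^k)^2\bigr)^{1/2}\bigl(\sum_k\mathbb{E}[\|\phi_k\|^2_{\Lambda^{-1}}]\bigr)^{1/2}$, and then bounds the second factor by the elliptical potential (log-determinant) lemma, $\sum_k\|\phi_k\|^2_{\Lambda^{-1}}\le d\log\bigl((t+1)/(d\tau)\bigr)$, while Lemma \ref{lem:sequence1} controls the first factor by $\sqrt{\lambda}/t^{c_\eta/2}$. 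This yields the $\sqrt{\lambda}d^2H^2\log(t+1/d\tau)/t^{c_\eta/2-c_\gamma}$ term \emph{on average along the realized trajectory}, with no coverage assumption. Your anticipated difficulty about non-i.i.d.\ trajectories and the filtration is real but is exactly what the self-normalized bound plus the elliptical potential lemma are designed to absorb; without that device your recursion has a non-vanishing $\eta_t\varepsilon_t^{\mathrm{ridge}}/\gamma_t$ term and the proof does not close.
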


Similar to the analysis of Theorem \ref{thm:approx_Q}, the key to deriving Theorem \ref{thm:linear_Q} lies in characterizing the estimation error $ \hat{g}_{t,h}(s_h^\alpha, a^\alpha)$. This is then affected by the estimation error of our value function, and the estimation of $\theta_h$. Leveraging a uniform convergence lemma (Lemma \ref{lem:8_7_agarwal}), we can effectively upper bound the error as:
\begin{align*}
     \hat{Q}^{\lambda, \alpha}_{h}\left(s_h^\alpha, a_h^\alpha, \pi^\beta_t, \mu^{\mathcal{I}}\right) - Q^{\lambda, \alpha}_{h}\left(s_h^\alpha,a_h^\alpha, \pi^\beta_t, \mu^{\mathcal{I}}\right)
     = \ & \phi(s_h^\alpha, a_h^\alpha, z_h^\alpha)^\top \left(\hat{\theta}_{t,h} - \theta_h^\ast\right)^\top V^{\lambda, \alpha}_{h + 1} \left(s_{h + 1}^\alpha, \pi^\alpha, \mu^{\mathcal{I}}\right)\\
     \leq \ & \Tilde{O}\left(dH \|\phi(s^\alpha_h,a^\alpha_h,z_h^\alpha)\|_{\Lambda_{t,h}^{-1}}\right)\,.
\end{align*}
Employing the same argument as Theorem \ref{thm:exact_Q}, we obtain a recursion on $D(\pi_t)$, which involves a summation of this estimation error $\int_\mathcal{I}\sum^H_{h=1} \sum^t_{k=1}\eta_k \omega_t^k\EE_{\mu^{\ast,\alpha}_h}\left[dH \|\phi(s^\alpha_{k,h},a^\alpha_{k,h}, z_{k,h}^\alpha)\|_{\Lambda_{t,h}^{-1}}\right] d\nu(\alpha) $.   
Lastly, we bound the summation of this term using an elliptical potential lemma and recursively apply the relationship of $D(\pi_t)$ to obtain the final bound.

\section{Experiments}
To verify the effectiveness of our algorithm, we analyze it empirically on four environments, Predator Prey, Crowd Avoidance, Crowd modeling, and Periodic Aversion. To assess the performance of our algorithm, we use exploitability, which is defined as $\text{exploitability}(\pi) = \int_\mathcal{I} J(\pi, \mu^\mathcal{I}) - \min_{\pi^\prime}J(\pi^\prime, \mu^\mathcal{I})$. To ensure reproducibility, we repeat each set of environments with 5 random seeds and present the result with one standard deviation.

\subsection{Basline algorithm}
For the baseline algorithm, we use fictitious play, a method known for providing a robust approximation of Nash equilibrium. Fictitious play iteratively computes the best response against the distribution induced by averaging past best responses. This method is known to perform well in various games, including mean-field games. However, since it needs to compute the best response every iteration, it is much more computationally heavier than our algorithm. We use the setup of fictitious play from \cite{perrin2020fictitious} and follow the implementation from OpenSpiel.
 
\subsection{Enviornment set ups}
For all the environments described below, we use the OpenSpiel implementation of the games. 

\paragraph{Crowd Modeling} The Crowd Modeling game also referred to as the beach bar process, presents a simplified rendition of the renowned Santa Fe bar problem \citep{greenwald1997learning}. Following the dynamic modeling and cost functions outlined in \cite{perrin2020fictitious} (section 4.2), we conduct our experiment with 10 states and 3 actions. A beach bar is located in one of the states. In a scorching weather condition, agents aim to position themselves in proximity to the bar while avoiding excessively crowded areas.

\paragraph{Crowd Avoidance}
The crowd avoidance problem is a simple two-population game. The game has $7$ states and $5$ actions (stay, up, down, left, and right). The agents will receive a cost of $1$ if they collide with each other, and a cost of $0$ if otherwise. We follow the implementation from OpenSpiel. While trying to avoid congestion, the agent must also avoid the forbidden states. 

\paragraph{Predator Prey}
We follow the setup described in section 5.4 of \cite{perolat2022scaling}. This game features three populations and bears a close resemblance to the popular outdoor game for children, Hens-Foxes-Snakes. In this context, hens endeavor to capture snakes, snakes pursue foxes, and foxes are inclined to prey upon hens. Although the population sizes are predetermined, the cost structure incentivizes agents to chase the population they dominate.

\paragraph{Periodic Aversion}
The periodic aversion game was first introduced in \cite{almulla2017two} and served as is an approximation of a continuous space, continuous time model introduced to study ergodic MFG with an explicit solution. We follow the implementation from OpenSpiel and the description from \cite{elie2020convergence}. Each agent has a position on a torus $T=[0,1]$ with periodic boundary conditions. The cost is then determined by a combination of the current position on the torus, the action, and the congestion of the agents.  

\subsection{Parameters and experiment configuration}
For all of our experiments, we choose the learning rate to be $\eta_t = 0.1$ and the exploration rate $\gamma_t = 0.1$. We repeat the experiments with $5$ different random seeds. We ran all experiments with a 10-core CPU, with 32 GB memory. 

\subsection{Experimental results}
We show the results of the four environments described above. As evident in Figure  \ref{fig}, the mirror descent algorithm attains comparable performance as the fictitious play in two out of four environments, while enjoying much better computational complexity as it does not require the computation of the best response at each iteration. By approximating the value function with a linear structure, the mirror descent algorithm gains improvement in the results in two out of four environments. 
\begin{figure}[h]
    \centering
    \includegraphics[width=0.45\textwidth]{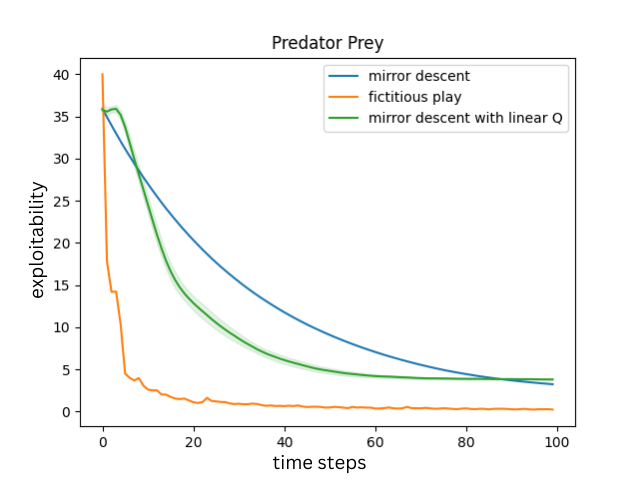}
    \includegraphics[width=0.45\textwidth]{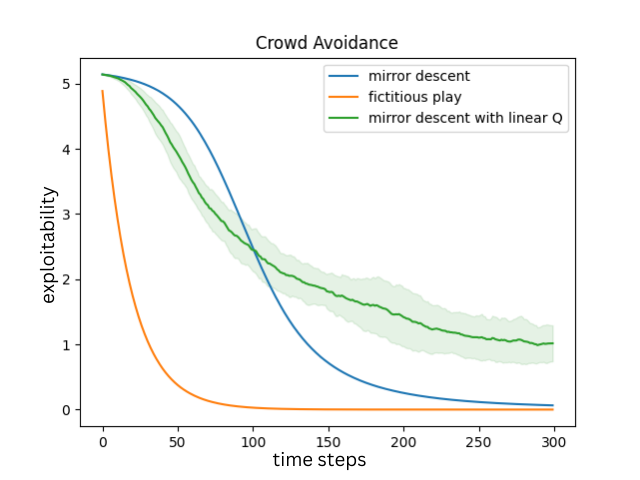}
     \includegraphics[width=0.45\textwidth]{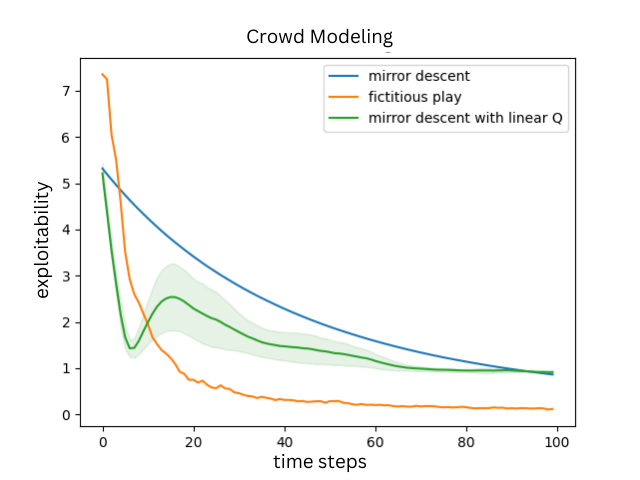}
     \includegraphics[width=0.45\textwidth]{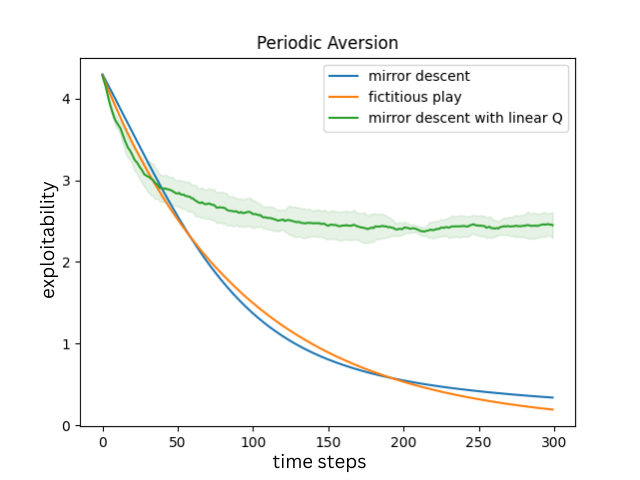}
    \caption{Experimental results for the mean field games described. }\label{fig}

\end{figure}

\section{Conclusion}

In this work, we present the first last-iterate convergence rate for monotone GMFGs with a mirror-descent algorithm. In tabular monotone GMFGs and under bandit feedback, we obtain a $O(T^{-1/4})$ last-iterate convergence rate. Under access to the exact cost and transition functions, we improved the rate to $O(T^{-1})$. In linear GMFGs, we achieve a last-iterate convergence rate of $O(T^{-1/5})$ under bandit feedback. Our study improves the understanding of mean-field games and the commonly used algorithms by providing insights both theoretically and numerically.

\newpage
\bibliography{ref}


\appendix


\newpage

\section{Proof of Theorem \ref{thm:exact_Q}}

\begin{proof}
By the update rule, we have
\begin{align*}
    & \KL\left( \pi^{\ast, \alpha}(\cdot \mid s_h^\alpha) , \pi_{t+1,h}^{\alpha}(\cdot \mid s_h^\alpha)\right) \\
    = \ & \KL\left( \pi^{\ast, \alpha}(\cdot \mid s_h^\alpha) , \pi_{t,h}^{\alpha}(\cdot \mid s_h^\alpha)\right) - \KL\left(\pi_{t+1,h}^{\alpha}(\cdot \mid s_h^\alpha) , \pi_{t,h}^{\alpha}(\cdot \mid s_h^\alpha)\right) \\
    & \ +\eta_t \left\langle \pi^{\ast, \alpha}(\cdot \mid s_h^\alpha) - \pi_{t+1,h}^{\alpha}(\cdot \mid s_h^\alpha) , Q^{\lambda, \alpha}_{h}\left(s_h^\alpha,\cdot, \pi^\beta_t, \mu^{\mathcal{I}}\right) +\lambda\log(\pi_{t,h}^{\alpha}(\cdot \mid s_h^\alpha)) \right\rangle \\
    = \ & \KL\left( \pi^{\ast, \alpha}(\cdot \mid s_h^\alpha) , \pi_{t,h}^{\alpha}(\cdot \mid s_h^\alpha)\right) - \KL\left(\pi_{t+1,h}^{\alpha}(\cdot \mid s_h^\alpha) , \pi_{t,h}^{\alpha}(\cdot \mid s_h^\alpha)\right) \\
    & \ +\eta_t \left\langle \pi^{\ast, \alpha}(\cdot \mid s_h^\alpha) - \pi_{t,h}^{\alpha}(\cdot \mid s_h^\alpha) , Q^{\lambda, \alpha}_{h}\left(s_h^\alpha,\cdot, \pi^\beta_t, \mu^{\mathcal{I}}\right) +\lambda\log(\pi_{t,h}^{\alpha}(\cdot \mid s_h^\alpha)) \right\rangle \\
    & \ +\eta_t \left\langle \pi_{t,h}^{\alpha}(\cdot \mid s_h^\alpha) - \pi_{t+1,h}^{\alpha}(\cdot \mid s_h^\alpha), Q^{\lambda, \alpha}_{h}\left(s_h^\alpha,\cdot, \pi^\beta_t, \mu^{\mathcal{I}}\right) +\lambda\log(\pi_{t,h}^{\alpha}(\cdot \mid s_h^\alpha)) \right\rangle  \\
    \leq \ & \KL\left( \pi^{\ast, \alpha}(\cdot \mid s_h^\alpha) , \pi_{t,h}^{\alpha}(\cdot \mid s_h^\alpha)\right) - \KL\left(\pi_{t+1,h}^{\alpha}(\cdot \mid s_h^\alpha) , \pi_{t,h}^{\alpha}(\cdot \mid s_h^\alpha)\right)\\
    & \ +\eta_t \left\| \pi_{t+1,h}^{\alpha}(\cdot \mid s_h^\alpha) - \pi_{t,h}^{\alpha}(\cdot \mid s_h^\alpha)\right\|_1\left\| Q^{\lambda, \alpha}_{h}\left(s_h^\alpha,\cdot, \pi^\beta_t, \mu^{\mathcal{I}}\right) \right\|_\infty \\
    & \ + \eta_t \left\langle \pi^{\ast, \alpha}(\cdot \mid s_h^\alpha) - \pi_{t,h}^{\alpha}(\cdot \mid s_h^\alpha)  , Q^{\lambda, \alpha}_{h}\left(s_h^\alpha,\cdot, \pi^\beta_t, \mu^{\mathcal{I}}\right) +\lambda\log(\pi_{t,h}^{\alpha}(\cdot \mid s_h^\alpha)) \right\rangle \\
    \leq \ & \KL\left( \pi^{\ast, \alpha}(\cdot \mid s_h^\alpha) , \pi_{t,h}^{\alpha}(\cdot \mid s_h^\alpha)\right) - \KL\left(\pi_{t+1,h}^{\alpha}(\cdot \mid s_h^\alpha) , \pi_{t,h}^{\alpha}(\cdot \mid s_h^\alpha)\right) +\eta_t H\sqrt{2\KL\left( \pi_{t+1,h}^{\alpha}(\cdot \mid s_h^\alpha),  \pi_{t,h}^{\alpha}(\cdot \mid s_h^\alpha)\right)}\\
     & \ + \eta_t \left\langle \pi^{\ast, \alpha}(\cdot \mid s_h^\alpha) - \pi_{t,h}^{\alpha}(\cdot \mid s_h^\alpha)  , Q^{\lambda, \alpha}_{h}\left(s_h^\alpha,\cdot, \pi^\beta_t, \mu^{\mathcal{I}}\right) +\lambda\log(\pi_{t,h}^{\alpha}(\cdot \mid s_h^\alpha)) \right\rangle  \\
     \leq \ & \KL\left( \pi^{\ast, \alpha}(\cdot \mid s_h^\alpha) , \pi_{t,h}^{\alpha}(\cdot \mid s_h^\alpha)\right) + \frac{\eta_t^2H^2}{2} + \eta_t \left\langle \pi^{\ast, \alpha}(\cdot \mid s_h^\alpha) - \pi_{t,h}^{\alpha}(\cdot \mid s_h^\alpha)  , Q^{\lambda, \alpha}_{h}\left(s_h^\alpha,\cdot, \pi^\beta_t, \mu^{\mathcal{I}}\right) +\lambda\log(\pi_{t,h}^{\alpha}(\cdot \mid s_h^\alpha)) \right\rangle  \,.
\end{align*}

Notice that $V^{\lambda, \alpha}_{h}\left(s_h^\alpha, \pi^{\alpha}, \mu^{\mathcal{I}}_t \right) = \left\langle Q^{\lambda, \alpha}_{h}\left(s_h^\alpha, \cdot, \pi^{\alpha}, \mu^{\mathcal{I}}_t \right) + \lambda \log(\pi^{\alpha}(\cdot \mid s_h)), \pi^{\alpha}(\cdot \mid s_h)\right\rangle$. 
By the definition of the value function and the performance difference lemma, we have
\begin{align*}
    & V^{\lambda, \alpha}_{1}\left(s^\alpha, \pi^{\ast, \alpha}, \mu^{\mathcal{I}}_t \right) -V^{\lambda, \alpha}_{1}\left(s^\alpha, \pi^{\alpha}, \mu^{\mathcal{I}}_t \right) \\
    = \ & \EE_{\pi^{\ast, \alpha}, \mu^{\mathcal{I}}_t }\left[\sum^H_{h=1} c_h(s_h^\alpha, a_h^\alpha, z_h^\alpha) +\lambda\log(\pi_{h}^{\ast,\alpha}(\cdot \mid s_h^\alpha))  \mid s_1 = s\right]- V^{\lambda, \alpha}_{1}\left(s^\alpha, \pi^{\alpha}, \mu^{\mathcal{I}}_t \right)\\
    = \ & \EE_{\pi^{\ast, \alpha}, \mu^{\mathcal{I}}_t }\left[\sum^H_{h=1} c_h(s_h^\alpha, a_h^\alpha, z_h^\alpha) +\lambda\log(\pi_{h}^{\ast,\alpha}(\cdot \mid s_h^\alpha)) + V^{\lambda, \alpha}_{h}\left(s_h^\alpha, \pi^{ \alpha}, \mu^{\mathcal{I}}_t \right)- V^{\lambda, \alpha}_{h}\left(s_h^\alpha, \pi^{ \alpha}, \mu^{\mathcal{I}}_t \right) \mid s_1^\alpha = s^\alpha\right] \\
    & \ - V^{\lambda, \alpha}_{1}\left(s^\alpha, \pi^{\alpha}, \mu^{\mathcal{I}}_t \right)\\
    = \ & \EE_{\pi^{\ast, \alpha}, \mu^{\mathcal{I}}_t }\left[\sum^H_{h=1} c_h(s_h^\alpha, a_h^\alpha, z_h^\alpha) + V^{\lambda, \alpha}_{h+1}\left(s_{h+1}^\alpha, \pi^{\alpha}, \mu^{\mathcal{I}}_t \right) +\lambda\log(\pi_{h}^{\ast,\alpha}(\cdot \mid s_h^\alpha)) - V^{\lambda, \alpha}_{h}\left(s_h^\alpha, \pi^{ \alpha}, \mu^{\mathcal{I}}_t \right) \mid s_1^\alpha = s^\alpha\right] \\
     = \ &  \EE_{\pi^{\ast, \alpha}, \mu^{\mathcal{I}}_t }\left[\sum^H_{h=1}  \left\langle Q^{\lambda, \alpha}_{h}\left(s_{h}^\alpha,\cdot, \pi^{\alpha} , \mu^{\mathcal{I}}_t \right) +\lambda\log(\pi_{h}^{\ast,\alpha}(\cdot \mid s_h^\alpha)) , \pi_{h}^{\ast,\alpha}(\cdot \mid s_h^\alpha)\right\rangle- V^{\lambda, \alpha}_{h}\left(s_h^\alpha, \pi^{\alpha}, \mu^{\mathcal{I}}_t \right) \mid s_1^\alpha = s^\alpha\right] \\
     = \ & \EE_{\pi^{\ast, \alpha}, \mu^{\mathcal{I}}_t }\left[\sum^H_{h=1}  \left\langle Q^{\lambda, \alpha}_{h}\left(s_{h}^\alpha,\cdot, \pi^{\alpha} , \mu^{\mathcal{I}}_t \right) , \pi_{h}^{\ast,\alpha}(\cdot \mid s_h^\alpha) - \pi_{h}^{\alpha}(\cdot \mid s_h^\alpha)\right\rangle \mid s_1^\alpha = s^\alpha\right]\\
     & \ +  \lambda \EE_{\pi^{\ast, \alpha}, \mu^{\mathcal{I}}_t}\left[\sum^H_{h=1}  \left\langle \log\left(\pi_{h}^{\ast,\alpha}(\cdot \mid s_h^\alpha)\right) , \pi_{h}^{\ast,\alpha}(\cdot \mid s_h^\alpha) \right\rangle- \left\langle \log(\pi^{\alpha}(\cdot \mid s_h^\alpha)), \pi^{\alpha}(\cdot \mid s_h^\alpha)\right\rangle\mid s_1^\alpha = s^\alpha\right] \\
     = \ & \EE_{\pi^{\ast, \alpha}, \mu^{\mathcal{I}}_t }\left[\sum^H_{h=1}  \left\langle Q^{\lambda, \alpha}_{h}\left(s_{h}^\alpha,\cdot, \pi^{\alpha} , \mu^{\mathcal{I}}_t \right) + \lambda\log(\pi^{\alpha}(\cdot \mid s_h^\alpha)), \pi_{h}^{\ast,\alpha}(\cdot \mid s_h^\alpha) - \pi_{h}^{\alpha}(\cdot \mid s_h^\alpha)\right\rangle  \mid s_1^\alpha = s^\alpha\right]\\
    & \ + \lambda  \EE_{\pi^{\ast, \alpha}, \mu^{\mathcal{I}}_t }\left[\sum^H_{h=1} \KL\left( \pi^{\ast, \alpha}(\cdot \mid s_h^\alpha) , \pi_{h}^{\alpha}(\cdot \mid s_h^\alpha)\right) \mid s_1^\alpha = s^\alpha\right]\,.
\end{align*}

Summing over $H$ and taking integral over $\alpha$, we therefore have
\begin{align*}
& \int_\mathcal{I}\sum^H_{h=1} \EE_{\mu^{\ast,\alpha}_h}\left[\KL\left( \pi^{\ast, \alpha}(\cdot \mid s_h^\alpha) , \pi_{t+1,h}^{\alpha}(\cdot \mid s_h^\alpha)\right)   \right] d\nu(\alpha)\\
\leq \ & \left(1-\eta_t\lambda \right)\int_\mathcal{I}\sum^H_{h=1} \EE_{\mu^{\ast,\alpha}_h}\left[\KL\left( \pi^{\ast, \alpha}(\cdot \mid s_h^\alpha) , \pi_{t,h}^{\alpha}(\cdot \mid s_h^\alpha)\right)\right]d\nu(\alpha)  + \eta_t^2H^3\\
& \ + \eta_t\int_\mathcal{I}\left(J^{\lambda, \alpha} (\pi^{\ast, \alpha}, \mu_t^\mathcal{I}) - J^{\lambda, \alpha} (\pi^{\alpha}, \mu_t^\mathcal{I})\right)d\nu(\alpha) \\
\leq \ & \left(1-\eta_t\lambda \right)\int_\mathcal{I}\sum^H_{h=1} \EE_{\mu^{\ast,\alpha}_h}\left[\KL\left( \pi^{\ast, \alpha}(\cdot \mid s_h^\alpha) , \pi_{t,h}^{\alpha}(\cdot \mid s_h^\alpha)\right)\right] d\nu(\alpha)  + \eta_t^2H^3 \,,
\end{align*}
as by proposition \ref{prop:monotone}, $\int_\mathcal{I}J^{\lambda, \alpha} (\pi^{\ast, \alpha}, \mu_t^\mathcal{I}) - J^{\lambda, \alpha} (\pi^{\alpha}, \mu_t^\mathcal{I}) d\nu(\alpha)\leq \int_\mathcal{I} J^{\lambda, \alpha} (\pi^{\ast, \alpha}, \mu_t^{\ast,\mathcal{I}}) - J^{\lambda, \alpha} (\pi^{\alpha}, \mu_t^{\ast,\mathcal{I}}) d\nu(\alpha)\leq 0$, by optimality.

Thus, let $\eta_t = t^{-1}$, we have
\begin{align*}
    D\left(\pi^\mathcal{I}_{t+1}\right)
    \leq  \ & (1-\eta_t\lambda) D\left(\pi^\mathcal{I}_{t}\right) + \eta_t^2 H^3
    \leq  \frac{H^3}{\lambda t}\,.
\end{align*}

\end{proof}

\section{Proof of Theorem \ref{thm:approx_Q}}

\begin{proof}
By the update rule, we have
\begin{align*}
    & \KL\left( \pi^{\ast, \alpha}(\cdot \mid s_h^\alpha) , \pi_{t+1,h}^{\alpha}(\cdot \mid s_h^\alpha)\right) \\
    = \ & \KL\left( \pi^{\ast, \alpha}(\cdot \mid s_h^\alpha) , \pi_{t,h}^{\alpha}(\cdot \mid s_h^\alpha)\right) - \KL\left(\pi_{t+1,h}^{\alpha}(\cdot \mid s_h^\alpha) , \pi_{t,h}^{\alpha}(\cdot \mid s_h^\alpha)\right) \\
    & \ +\eta_t \left\langle \pi^{\ast, \alpha}(\cdot \mid s_h^\alpha) -\pi_{t+1,h}^{\alpha}(\cdot \mid s_h^\alpha) , \hat{g}_{t,h}(s_h^\alpha, \cdot) +\lambda\log(\pi_{t,h}^{\alpha}(\cdot \mid s_h^\alpha)) \right\rangle \\
    = \ & \KL\left( \pi^{\ast, \alpha}(\cdot \mid s_h^\alpha) , \pi_{t,h}^{\alpha}(\cdot \mid s_h^\alpha)\right) - \KL\left(\pi_{t+1,h}^{\alpha}(\cdot \mid s_h^\alpha) , \pi_{t,h}^{\alpha}(\cdot \mid s_h^\alpha)\right) \\
    & \ +\eta_t \left\langle\pi^{\ast, \alpha}(\cdot \mid s_h^\alpha) -\pi_{t,h}^{\alpha}(\cdot \mid s_h^\alpha) , \hat{g}_{t,h}(s_h^\alpha, \cdot)+\lambda\log(\pi_{t,h}^{\alpha}(\cdot \mid s_h^\alpha)) \right\rangle \\
    & \ +\eta_t \left\langle \pi_{t,h}^{\alpha}(\cdot \mid s_h^\alpha) - \pi_{t+1,h}^{\alpha}(\cdot \mid s_h^\alpha), \hat{g}_{t,h}(s_h^\alpha, \cdot)+\lambda\log(\pi_{t,h}^{\alpha}(\cdot \mid s_h^\alpha)) \right\rangle  \\
    \leq \ & \KL\left( \pi^{\ast, \alpha}(\cdot \mid s_h^\alpha) , \pi_{t,h}^{\alpha}(\cdot \mid s_h^\alpha)\right) - \KL\left(\pi_{t+1,h}^{\alpha}(\cdot \mid s_h^\alpha) , \pi_{t,h}^{\alpha}(\cdot \mid s_h^\alpha)\right)\\
    & \ +\eta_t \left\| \pi_{t+1,h}^{\alpha}(\cdot \mid s_h^\alpha) - \pi_{t,h}^{\alpha}(\cdot \mid s_h^\alpha)\right\|_1\left\| \hat{g}_{t,h}(s_h^\alpha, \cdot)\right\|_\infty \\
    & \ + \eta_t \left\langle \pi^{\ast, \alpha}(\cdot \mid s_h^\alpha) -\pi_{t,h}^{\alpha}(\cdot \mid s_h^\alpha) , \hat{g}_{t,h}(s_h^\alpha, \cdot)+\lambda\log(\pi_{t,h}^{\alpha}(\cdot \mid s_h^\alpha)) \right\rangle \\
    \leq \ & \KL\left( \pi^{\ast, \alpha}(\cdot \mid s_h^\alpha) , \pi_{t,h}^{\alpha}(\cdot \mid s_h^\alpha)\right) - \KL\left(\pi_{t+1,h}^{\alpha}(\cdot \mid s_h^\alpha) , \pi_{t,h}^{\alpha}(\cdot \mid s_h^\alpha)\right) + \frac{\eta_t H}{\gamma_t}\sqrt{2\KL\left( \pi_{t+1,h}^{\alpha}(\cdot \mid s_h^\alpha),  \pi_{t,h}^{\alpha}(\cdot \mid s_h^\alpha)\right)}\\
     & \ + \eta_t \left\langle \pi^{\ast, \alpha}(\cdot \mid s_h^\alpha) -\pi_{t,h}^{\alpha}(\cdot \mid s_h^\alpha) , \hat{g}_{t,h}(s_h^\alpha, \cdot)+\lambda\log(\pi_{t,h}^{\alpha}(\cdot \mid s_h^\alpha)) \right\rangle  \\
     \leq \ & \KL\left( \pi^{\ast, \alpha}(\cdot \mid s_h^\alpha) , \pi_{t,h}^{\alpha}(\cdot \mid s_h^\alpha)\right) + \frac{\eta_t^2H^2}{2\gamma_t^2} + \eta_t \left\langle \pi^{\ast, \alpha}(\cdot \mid s_h^\alpha) -\pi_{t,h}^{\alpha}(\cdot \mid s_h^\alpha) , Q^{\lambda, \alpha}_{h}\left(s_h^\alpha,\cdot, \pi^\beta_t, \mu^{\mathcal{I}}\right) +\lambda\log(\pi_{t,h}^{\alpha}(\cdot \mid s_h^\alpha)) \right\rangle  \\
     & \ + \eta_t \left\langle \pi^{\ast, \alpha}(\cdot \mid s_h^\alpha) -\pi_{t,h}^{\alpha}(\cdot \mid s_h^\alpha) , \hat{g}_{t,h}(s_h^\alpha, \cdot)- Q^{\lambda, \alpha}_{h}\left(s_h^\alpha,\cdot, \pi^\beta_t, \mu^{\mathcal{I}}\right) \right\rangle \,.
\end{align*}

For the last term, 
\begin{align*}
    & \eta_t\left\langle \pi^{\ast, \alpha}(\cdot \mid s_h^\alpha) -\pi_{t,h}^{\alpha}(\cdot \mid s_h^\alpha) , \hat{g}_{t,h}(s_h^\alpha, \cdot)- Q^{\lambda, \alpha}_{h}\left(s_h^\alpha,\cdot, \pi^\beta_t, \mu^{\mathcal{I}}\right) \right\rangle \\
    = \ & \eta_t \left\langle  \pi^{\ast, \alpha}(\cdot \mid s_h^\alpha) -\pi_{t,h}^{\alpha}(\cdot \mid s_h^\alpha)  ,\frac{\mathbb{I}\{a_h = a\}\left(c_h + \hat{V}^{\lambda,\alpha}_{h+1}\left(s_{h+1}^\alpha, \pi_t^\alpha, \mu^\mathcal{I}\right)\right)}{\pi_{t,h}(s_h) + \gamma_t} - \frac{\mathbb{I}\{a_h = a\}\left(c_h + V^{\lambda,\alpha}_{h+1}\left(s_{h+1}^\alpha, \pi_t^\alpha, \mu^\mathcal{I}\right)\right)}{\pi_{t,h}(s_h) + \gamma_t}\right\rangle  \\
    + & \ \eta_t\left\langle  \pi^{\ast, \alpha}(\cdot \mid s_h^\alpha) -\pi_{t,h}^{\alpha}(\cdot \mid s_h^\alpha) ,\frac{\mathbb{I}\{a_h = a\}\left(c_h + V^{\lambda,\alpha}_{h+1}\left(s_{h+1}^\alpha, \pi_t^\alpha, \mu^\mathcal{I}\right)\right)}{\pi_{t,h}(s_h) + \gamma_t}- \left(c_h+ \EE_{\pi^\alpha} \left[V^{\lambda, \alpha}_{h + 1} \left(s_{h+1}^\alpha, \pi^\alpha, \mu^{\mathcal{I}}\right)\right]\right)\right\rangle  \\
    \leq \ & \frac{\eta_t}{\gamma_t} \left|\hat{V}^{\lambda,\alpha}_{h+1}\left(s_{h+1}^\alpha, \pi_t^\alpha, \mu^\mathcal{I}\right) - V^{\lambda,\alpha}_{h+1}\left(s_{h+1}^\alpha, \pi_t^\alpha, \mu^\mathcal{I}\right)\right| + \eta_t\epsilon_{t,h}  + \eta_t\zeta_{t,h} \,.
\end{align*}
where 
\begin{align*}
    \epsilon_{t,h} = \ & \left\langle \pi^{\ast, \alpha}(\cdot \mid s_h^\alpha)  ,\frac{\mathbb{I}\{a_h = a\}\left(c_h + V^{\lambda,\alpha}_{h+1}\left(s_{h+1}^\alpha, \pi_t^\alpha, \mu^\mathcal{I}\right)\right)}{\pi_{t,h}(s_h) + \gamma_t}- \left(c_h+ \EE_{\pi^\alpha} \left[V^{\lambda, \alpha}_{h + 1} \left(s_{h+1}^\alpha, \pi^\alpha, \mu^{\mathcal{I}}\right)\right]\right)\right\rangle \\
    \zeta_{t,h} = \ & \left\langle \pi_{t,h}^{\alpha}(\cdot \mid s_h^\alpha), \left(c_h+ \EE_{\pi^\alpha} \left[V^{\lambda, \alpha}_{h + 1} \left(s_{h+1}^\alpha, \pi^\alpha, \mu^{\mathcal{I}}\right)\right]\right) - \frac{\mathbb{I}\{a_h = a\}\left(c_h + V^{\lambda,\alpha}_{h+1}\left(s_{h+1}^\alpha, \pi_t^\alpha, \mu^\mathcal{I}\right)\right)}{\pi_{t,h}(s_h) + \gamma_t}\right\rangle \,.
\end{align*}
Choose $\beta_t=\frac{H+1}{H+t}$, by Lemma \ref{lem:value_convergence}, we have $\left(\hat{V}^{\lambda,\alpha}_{h+1}\left(s_{h+1}^\alpha, \pi_t^\alpha, \mu^\mathcal{I}\right) - V^{\lambda,\alpha}_{h+1}\left(s_{h+1}^\alpha, \pi_t^\alpha, \mu^\mathcal{I}\right)\right) \leq O \left(\sqrt{\frac{H^3\log(HSAT/\delta)}{t}}\right)$.

Define $\omega_t^k = \prod^t_{j=k+1} \left(1-\eta_j\lambda \right)$, and $\iota = \log(HSAT/\delta)$.
Similar to the proof of the exact gradient case, using the monotonicity condition and summing over $H$ and set $\eta_0\lambda = 1$, we therefore have
\begin{align*}
& \sum^H_{h=1} \EE_{\mu^{\ast,\alpha}_h}\left[\KL\left( \pi^{\ast, \alpha}(\cdot \mid s_h^\alpha) , \pi_{t+1,h}^{\alpha}(\cdot \mid s_h^\alpha)\right)   \right] \\
\leq \ & \left(1-\eta_t\lambda \right)\sum^H_{h=1} \EE_{\mu^{\ast,\alpha}_h}\left[\KL\left( \pi^{\ast, \alpha}(\cdot \mid s_h^\alpha) , \pi_{t,h}^{\alpha}(\cdot \mid s_h^\alpha)\right)\right]  + \sum^H_{h=1} \left(\epsilon_{t,h}+\zeta_{t,h}\right) + \frac{\eta_t^2H^3}{2\gamma_t^2} + \frac{\eta_t\sqrt{H^3\iota}}{\gamma_t\sqrt{t}} \\
= \ &  \sum^H_{h=1} \sum^t_{k=1} \eta_k \omega_t^k\left(\epsilon_{k,h}+\zeta_{k,h}\right) +  \sum^t_{k=1}\omega_t^k\left(\frac{\eta_k^2H^3}{2\gamma_k^2} + \frac{\eta_k\sqrt{H^3\iota}}{\gamma_t\sqrt{t}} \right)\\
\leq \ & O \left(A\sum^t_{k=1}\gamma_k\eta_k\omega_t^k + \sqrt{\log(A/\delta)\sum^t_{k=1}\eta_k^2(\omega_t^k)^2} + \max_{k \in [t]}\frac{\eta_k\omega_t^k\log(1/\delta)}{\gamma_k} +  \sum^t_{k=1}\omega_t^k\left(\frac{\eta_k^2H^3}{2\gamma_k^2} + \frac{\eta_k\sqrt{H^3\iota}}{\gamma_k\sqrt{t}} \right)\right) \,,
\end{align*}

where the last inequality is by Lemma \ref{lem:omd_est_error1} and Lemma \ref{lem:omd_est_error2}.

Take $\eta_k = \frac{1}{\lambda k^{c_\eta}}$, $\gamma_k = \frac{1}{k^{c_\gamma}}$, by Lemma \ref{lem:sequence1}, we have
\begin{align*}
     O\left(A\sum^t_{k=1}\gamma_k\eta_k\omega_t^k + \sqrt{\log(A/\delta)\sum^t_{k=1}\eta_k^2(\omega_t^k)^2}\right) = \ & O \left(\frac{A\log(t)}{ t^{c_\gamma}} + \sqrt{\frac{\log(A/\delta)}{t^{2c_\gamma}}}\right) \,, \\
    O \left(\sum^t_{k=1}\omega_t^k\left(\frac{\eta_k^2H^3}{2\gamma_k^2} + \frac{\eta_k\sqrt{H^3\iota}}{\sqrt{t}} \right)\right)
    =\ &  O \left(\frac{H^3}{t^{c_\eta - 2c_\gamma }} + \frac{\sqrt{H^3\iota}}{t^{1/2-c_\gamma}}\right) \,.
\end{align*}
By Lemma \ref{lem:sequence2}, we have
\begin{align*}
    O \left(\max_{k \in [t]}\frac{\eta_k\omega_t^k\log(1/\delta)}{\gamma_t} \right) = O \left(\frac{\log(1/\delta)}{t^{c_\eta - c_\gamma}}\right)
\end{align*}

Take $c_\eta = 3/4, c_\gamma = 1/4$, and integrating over $\alpha \in [0,1]$, we therefore have
\begin{align*}
    D\left(\pi^\mathcal{I}_{t+1}\right) \leq  O \left(\frac{A\log(t)}{ t^{3/4}} + \frac{\sqrt{\log(A/\delta)}}{t^{3/4}} + \frac{H^3}{t^{1/4}} + \frac{\sqrt{H^3\iota}}{t^{1/4}} + \frac{\log(1/\delta)}{t^{1/2}}\right)\,.
\end{align*}
\end{proof}

\begin{lem}\label{lem:value_convergence}
    For any $h,t,\alpha$, take $\beta_t=\frac{H+1}{H+t}$, and we have
    \begin{align*}
\left|\hat{V}^{\lambda,\alpha}_{h}\left(s_h^\alpha, \pi_t^\alpha, \mu^\mathcal{I}\right) - V^{\lambda,\alpha}_{h}\left(s_h^\alpha, \pi_t^\alpha, \mu^\mathcal{I}\right)\right| \leq O \left(\sqrt{\frac{H^3\log(HSAT/\delta)}{t}}\right)\,.
    \end{align*}
\end{lem}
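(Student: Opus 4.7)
The plan is to adapt the Q-learning style stochastic-approximation analysis (in the spirit of Jin et al.\ for finite-horizon tabular MDPs) to the regularized value function here. Let $k = N_{t,h}(s)$ be the visit count to $s$ at step $h$ through episode $t$, with $t^1 < t^2 < \cdots < t^k$ the episodes at which $s$ was visited. Unrolling the stochastic-approximation recursion for $\hat V$ yields
\begin{align*}
\hat V^{\lambda,\alpha}_h(s,\pi_t^\alpha,\mu^{\mathcal I}) = \sum_{i=1}^k \beta_k^i \left[c_h(s,a_h^{t^i},z_h^{t^i}) + \hat V^{\lambda,\alpha}_{t^i,h+1}(s_{h+1}^{t^i},\pi_{t^i}^\alpha,\mu^{\mathcal I})\right],
\end{align*}
where $\beta_k^i := \beta_i \prod_{j=i+1}^{k} (1-\beta_j)$. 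With the choice $\beta_t = (H+1)/(H+t)$, standard algebra gives $\sum_i \beta_k^i = 1$, $\max_i \beta_k^i \leq 2H/k$, and $\sum_i (\beta_k^i)^2 \leq 2H/k$; these three properties are the engine of the whole argument.

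Next, I would subtract off the true value $V^{\lambda,\alpha}_h(s,\pi_t^\alpha,\mu^{\mathcal I})$ and add and subtract the ``matched policy'' quantity $V^{\lambda,\alpha}_{h+1}(s_{h+1}^{t^i},\pi_{t^i}^\alpha,\mu^{\mathcal I})$ at each visit to produce the decomposition previewed in the paper: a stochastic-noise term $\sum_i \beta_k^i [(P_h - \hat P_h^{t^i}) V^{\lambda,\alpha}_{h+1}(\pi_{t^i}^\alpha)](s,a_h^{t^i})$, and a propagated-error term $\sum_i \beta_k^i (\hat V^{\lambda,\alpha}_{t^i,h+1} - V^{\lambda,\alpha}_{h+1})(s_{h+1}^{t^i},\pi_{t^i}^\alpha,\mu^{\mathcal I})$. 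Conditioning on the history up to episode $t^i$, the summands of the first term form a zero-mean martingale difference sequence bounded in $[-H,H]$, so an Azuma--Hoeffding bound with the weighted variance $\sum_i (\beta_k^i)^2 \leq 2H/k$ delivers, with probability at least $1-\delta$,
\begin{align*}
\Bigl|\sum_{i=1}^k \beta_k^i (P_h - \hat P_h^{t^i}) V^{\lambda,\alpha}_{h+1}(\pi_{t^i}^\alpha)\Bigr| \leq O\!\left(\sqrt{H^3 \log(1/\delta)/k}\right),
\end{align*}
and a union bound over $(h,s,a,t)$ inflates $\delta$ by at most $HSAT$, producing the $\log(HSAT/\delta)$ factor.

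For the propagated term I would run backward induction on $h$: the base case $h = H+1$ is trivial since both $\hat V$ and $V$ are zero. Assuming the claim at level $h+1$ with rate $O(\sqrt{H^3\iota/k'})$, the weights $\beta_k^i$ summing to one mean the propagated contribution inherits the same order (using also that $\beta_k^i$ concentrates on recent visits so $k'$ for the relevant $(s_{h+1}^{t^i}, h+1)$ is comparable to $k$; this is the standard Jin-et-al.\ recursion). Hence the bound at level $h$ matches that at level $h+1$ up to constants, and the final pointwise estimate is $O(\sqrt{H^3\log(HSAT/\delta)/k})$. Converting visit counts to episode counts via $k \geq \Omega(t)$ for states visited with positive probability (or by separately handling the low-visit regime with the trivial bound $\hat V, V \in [0, H]$) yields the stated $O(\sqrt{H^3\log(HSAT/\delta)/t})$.

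The main obstacle is the policy-drift mismatch between the target $V^{\lambda,\alpha}_h(\pi_t^\alpha)$ and the samples gathered under $\pi_{t^i}^\alpha$ for $t^i < t$. My plan is to absorb this inside the induction by tracking, at each prior visit, the value under the policy active at that visit, and then controlling the residual $\sum_i \beta_k^i (V(\pi_{t^i}) - V(\pi_t))$ either through entropy-induced Lipschitzness of $V^\lambda$ in $\pi$ or through the slow-movement guarantee on $\pi_t$ implied by the schedule $\eta_t$. A secondary subtlety is ensuring the lower bound $k \gtrsim t$; if this fails uniformly it can be patched by splitting into high- and low-visitation regimes and noting that the low-visit regime contributes negligibly in expectation under $\mu^{*,\alpha}_h$, which is the weighting used in the convergence metric $D(\pi^{\mathcal I}_t)$.
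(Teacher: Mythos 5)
Your proposal follows essentially the same route as the paper's proof: unrolling the stochastic-approximation recursion with the weights $\beta_k^i$, splitting into a martingale noise term (bounded via Azuma--Hoeffding with $\sum_i(\beta_k^i)^2 \lesssim H/k$ and a union bound over $(h,s,a,t)$) and a propagated term controlled by induction using $\max_i\beta_k^i \leq 2/k$ and $\sum_i\beta_k^i=1$. In fact you are somewhat more careful than the paper, which silently identifies $k$ with $t$ and does not address the policy-drift mismatch between the target $V^{\lambda,\alpha}_h(\cdot,\pi_t^\alpha,\cdot)$ and the samples collected under $\pi_{t^i}^\alpha$; both of the patches you sketch for these points are reasonable and would be needed to make the argument fully rigorous.
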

\begin{proof}
Define an auxiliary sequence $\left\{\beta_t^i\right\}_{i=1}^t$ based on the learning rate.
$$
\beta_t=\frac{H+1}{H+t}, \quad \beta_t^0=\prod_{j=1}^t\left(1-\beta_j\right), \quad \beta_t^i=\beta_i \prod_{j=i+1}^t\left(1-\beta_j\right)\,.
$$

Let $k$ denote the number of times $s$ is visited at step $h$ at the beginning of episode $t$, and suppose s was previously visited at episodes $t^1, \ldots, t^k < t$ at the $h$-th step. 
Then by Lemma 11 of \cite{jin2022v}, we can write 
\begin{align*}
    \hat{V}^{\lambda,\alpha}_{t,h}\left(s_h^\alpha, \pi_t^\alpha, \mu^\mathcal{I}\right) 
    =\ & \sum^k_{i=1} \beta_k^i \left(c_{h}(s_h^\alpha, a_h^{\alpha, t^i}) + \hat{V}^{\lambda,\alpha}_{t,h+1}\left(s^{t^i}_{h+1}, \pi_{t^i}^\alpha, \mu^\mathcal{I}\right) \right)\,.
\end{align*}
We denote $\left(P_hV^{\lambda,\alpha}_{h+1}\left( \pi_{t^i}^\alpha, \mu^\mathcal{I}\right)\right)(s,a) = \EE_{\pi_{t^i}^\alpha, s^\prime\sim P_h(\cdot \mid s,a)} V^{\lambda,\alpha}_{h+1}\left(s^\prime, \pi_{t^i}^\alpha, \mu^\mathcal{I}\right)$. We also let $\hat{P}_h^t$ to denote the actual state tradition experienced under $\pi_{t}^\alpha, \mu^\mathcal{I}$ at episode $t$. 
By the Bellman equation, and using the fact that $\sum^t_{i=0} \beta_t^i = 1$, we have
\begin{align*}
    & V^{\lambda,\alpha}_{h}\left(s_h^\alpha, \pi_t^\alpha, \mu^\mathcal{I}\right) \\
    = \ & \beta_k^0 V^{\lambda,\alpha}_{h}\left(s_h^\alpha, \pi_t^\alpha, \mu^\mathcal{I}\right) + \sum^k_{i=1}\beta_k^i \EE_{\pi_{t^i}^\alpha}\left[ \left(c_{h}(s_h^\alpha, a^\alpha) + \left(P_hV^{\lambda,\alpha}_{h+1}\left( \pi_{t^i}^\alpha, \mu^\mathcal{I}\right)\right)(s_h^\alpha,a^\alpha)\right)\right] \\
    = \ &  \beta_k^0 V^{\lambda,\alpha}_{h}\left(s_h^\alpha, \pi_t^\alpha, \mu^\mathcal{I}\right) + \sum^k_{i=1}\beta_k^i \EE_{\pi_{t^i}^\alpha}\left[ \left(c_{h}(s_h^\alpha, a^\alpha) + \left(\left(P_h - \hat{P}_h^{t_i}\right)V^{\lambda,\alpha}_{h+1}\left( \pi_{t^i}^\alpha, \mu^\mathcal{I}\right)\right)(s_h^\alpha,a^\alpha)\right)\right] \\
    & \ - \beta_k^i V^{\lambda,\alpha}_{h+1}\left(s^{t^i}_{h+1}, \pi_{t^i}^\alpha, \mu^\mathcal{I}\right)
\end{align*}
Therefore, we have the following recursion. 
    \begin{align*}
        & \hat{V}^{\lambda,\alpha}_{t,h}\left(s_h^\alpha, \pi_t^\alpha, \mu^\mathcal{I}\right) - V^{\lambda,\alpha}_{h}\left(s_h^\alpha, \pi_t^\alpha, \mu^\mathcal{I}\right) \\
        \leq \ & \sum^k_{i=1}\EE_{\pi_{t^i}^\alpha}\left[ \beta_k^i \left(\left(P_h - \hat{P}_h^{t_i}\right)V^{\lambda,\alpha}_{h+1}\left( \pi_{t^i}^\alpha, \mu^\mathcal{I}\right)(s_h^\alpha,a^\alpha)\right)\right] + \beta_k^i\left(\hat{V}^{\lambda,\alpha}_{t^i,h+1}\left(s^{t^i}_{h+1}, \pi_{t^i}^\alpha, \mu^\mathcal{I}\right) - V^{\lambda,\alpha}_{h+1}\left(s^{t^i}_{h+1}, \pi_{t^i}^\alpha, \mu^\mathcal{I}\right)\right)\,.
    \end{align*}
    Applying Martingale concentration and using the property of the sequence $\beta_k^i$, we have
    \begin{align*}
        & \hat{V}^{\lambda,\alpha}_{t,h}\left(s_h^\alpha, \pi_t^\alpha, \mu^\mathcal{I}\right) - V^{\lambda,\alpha}_{h}\left(s_h^\alpha, \pi_t^\alpha, \mu^\mathcal{I}\right)\\
        \leq\ &  O \left(\sqrt{\frac{H^3\log(HSAT/\delta)}{t}}\right) + \sum^k_{i=1} \beta_k^i\left(\hat{V}^{\lambda,\alpha}_{t^i,h+1}\left(s^{t^i}_{h+1}, \pi_{t^i}^\alpha, \mu^\mathcal{I}\right) - V^{\lambda,\alpha}_{h+1}\left(s^{t^i}_{h+1}, \pi_{t^i}^\alpha, \mu^\mathcal{I}\right)\right)\,.
    \end{align*}

    We now upper bound this by induction. Suppose for $\tau = t^i$, $i = 1, \ldots, k$, $\forall h \in [H]$ 
    \begin{align*}
        \hat{V}^{\lambda,\alpha}_{\tau,h}\left(s_h^\alpha, \pi_t^\alpha, \mu^\mathcal{I}\right) - V^{\lambda,\alpha}_{h}\left(s_h^\alpha, \pi_t^\alpha, \mu^\mathcal{I}\right)\leq O \left(\sqrt{\frac{H^3\log(HSAT/\delta)}{t}}\right) \,.
    \end{align*}

    For $t=1$, the base case holds trivially. 
    Then we consider $\hat{V}^{\lambda,\alpha}_{t,h}\left(s_h^\alpha, \pi_t^\alpha, \mu^\mathcal{I}\right) - V^{\lambda,\alpha}_{h}\left(s_h^\alpha, \pi_t^\alpha, \mu^\mathcal{I}\right)$, and notice that $t$ is the next update after time step $k$,
    \begin{align*}
        & \sum^k_{i=1} \beta_k^i\left(\hat{V}^{\lambda,\alpha}_{t,h+1}\left(s^{t^i}_{h+1}, \pi_{t^i}^\alpha, \mu^\mathcal{I}\right) - V^{\lambda,\alpha}_{h+1}\left(s^{t^i}_{h+1}, \pi_{t^i}^\alpha, \mu^\mathcal{I}\right)\right) \\
        \leq \ & \sum^k_{i=1}  O \left(\frac{2}{t}\cdot \sqrt{\frac{H^3\log(HSAT/\delta)}{t}}\right) \leq  O \left(\sqrt{\frac{H^3\log(HSAT/\delta)}{t}}\right)\,.
    \end{align*}
    where the first inequality is by Lemma 4.1 of \cite{jin2018q}, $\max_{i\in[t]}\beta_t^i \leq \frac{2}{t}$, and the second inequality is due to $k \leq t$.

    Hence, we have the claimed result.

\end{proof}

\section{Auxiliary lemmas for gradient estimation}

Let $\mathbb{I}\{a_h = a\}$ be a one-hot vector where it is one at $a_h$. Suppose one run OMD with estimated gradient $\hat{g}_t$ to obtain $\pi_t$, where $\hat{g}_t(a) = \frac{\sigma_t \mathbb{I}\{a_t = a\}}{\pi_t(a) + \gamma_t}$, $\EE[\mathbb{I}\{a_t = a\}\mid \mathcal{F}_{t-1}] = \pi_t(a)$, $\EE[\sigma_t\mid \mathcal{F}_{t-1}] = g_t$, and $\mathcal{F}_{t-1}$ is the history generated up to time $t$. Here we let $\pi^\ast_t$ be the optimality policy, which is independent of $g_t, \hat{g}_t$.
\begin{lem}[Lemma 20 of \cite{bai2020near}]
\label{lem:omd_est_error1}
Let $c_1, c_2, \ldots, c_t$ be fixed positive numbers. Then with probability at least $1-\delta$,
$$
\sum_{i=1}^t c_i\left\langle \pi_i, g_i-\hat{g}_i\right\rangle=\mathcal{O}\left(A \sum_{i=1}^t \gamma_i c_i+\sqrt{\ln (A / \delta) \sum_{i=1}^t c_i^2}\right) \,.
$$

\end{lem}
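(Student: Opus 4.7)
The plan is to split $\langle \pi_i,\, g_i-\hat g_i\rangle$ into its conditional mean plus a martingale difference and to bound the two pieces separately. Since $\hat g_i(a)$ is a biased importance-weighted estimator with implicit-exploration parameter $\gamma_i$, the two pieces correspond precisely to the two terms on the right-hand side of the claimed bound.

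First I would compute the conditional mean. Using $\EE[\mathbb{I}\{a_i=a\}\mid \mathcal{F}_{i-1}]=\pi_i(a)$ and $\EE[\sigma_i\mid \mathcal{F}_{i-1}]=g_i(a)$, one gets $\EE[\hat g_i(a)\mid \mathcal{F}_{i-1}] = g_i(a)\pi_i(a)/(\pi_i(a)+\gamma_i)$, so
\[
\EE[\langle \pi_i,\, g_i-\hat g_i\rangle\mid \mathcal{F}_{i-1}]
\;=\; \sum_a \pi_i(a)\, g_i(a)\, \frac{\gamma_i}{\pi_i(a)+\gamma_i}.
\]
Assuming $|g_i(a)|\le 1$ (inherited from the bounded-cost assumption) and using $\pi_i(a)/(\pi_i(a)+\gamma_i)\le 1$, each summand is at most $\gamma_i$, so the conditional mean is bounded by $A\gamma_i$. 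Weighting by $c_i$ and summing over $i$ produces the deterministic contribution $A\sum_{i=1}^{t} c_i\gamma_i$, which matches the first term in the statement.

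Second, I would define $M_i := c_i\bigl(\langle \pi_i,g_i-\hat g_i\rangle - \EE[\langle \pi_i,g_i-\hat g_i\rangle\mid \mathcal{F}_{i-1}]\bigr)$, a martingale difference sequence adapted to $\{\mathcal{F}_i\}$, and show $|\sum_i M_i| = O(\sqrt{\log(A/\delta)\sum_i c_i^2})$. A naive Azuma--Hoeffding bound, using $|\langle \pi_i,\hat g_i\rangle| = |\sigma_i|\,\pi_i(a_i)/(\pi_i(a_i)+\gamma_i)\le 1$, already yields a $\log(1/\delta)$ tail that suffices up to the logarithmic factor. To recover the sharper $\log(A/\delta)$ form I would decompose coordinate-wise, $\sum_i c_i\langle \pi_i, g_i-\hat g_i\rangle = \sum_a \sum_i c_i\pi_i(a)(g_i(a)-\hat g_i(a))$, and for each fixed $a$ apply the standard implicit-exploration moment-generating-function lemma, which uses $\gamma_i\hat g_i(a)\le 1$ together with $\log(1+x)\ge x - x^2/2$ to control the MGF of $\hat g_i(a)-g_i(a)$ without suffering the $1/\gamma_i$ variance blow-up; a union bound over the $A$ actions then contributes the $\log(A/\delta)$ factor. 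The main obstacle is precisely this concentration step: a generic Bernstein-type inequality fails because the conditional second moment of $\hat g_i(a)$ is of order $1/(\pi_i(a)+\gamma_i)$, which can be as large as $1/\gamma_i$. Only by exploiting the specific algebraic structure of the implicit-exploration estimator can the $\gamma_i$-dependence be absorbed into the bias term and a coordinate-wise sub-Gaussian tail be obtained. Combining the bias bound with the coordinate-wise concentration and the union bound delivers the claimed $\mathcal{O}\bigl(A\sum_i c_i\gamma_i+\sqrt{\log(A/\delta)\sum_i c_i^2}\bigr)$ with probability at least $1-\delta$.
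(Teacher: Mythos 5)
The paper does not prove this statement; it is imported verbatim as Lemma~20 of \cite{bai2020near}, so there is no in-paper proof to compare against. Your reconstruction is the standard (and, up to presentation, the original) argument: split $c_i\langle \pi_i, g_i-\hat g_i\rangle$ into the conditional bias and a martingale difference. The bias computation is right: $\EE[\hat g_i(a)\mid\mathcal F_{i-1}]=g_i(a)\pi_i(a)/(\pi_i(a)+\gamma_i)$, so the bias of the inner product is $\sum_a \pi_i(a)g_i(a)\gamma_i/(\pi_i(a)+\gamma_i)\le A\gamma_i$ under the bounded-payoff normalization, giving the first term. For the fluctuation term, your own observation that $\langle\pi_i,\hat g_i\rangle=\sigma_i\pi_i(a_i)/(\pi_i(a_i)+\gamma_i)$ is bounded by $|\sigma_i|$ (the importance weight cancels against $\pi_i$) already closes the proof: Azuma--Hoeffding on increments of size $O(c_i)$ gives $O(\sqrt{\ln(1/\delta)\sum_i c_i^2})$, and since $\ln(1/\delta)\le\ln(A/\delta)$ this is \emph{stronger} than the claimed bound, not weaker ``up to a logarithmic factor.'' The subsequent coordinate-wise implicit-exploration MGF argument is therefore unnecessary for this lemma; that machinery is what one needs for the companion bound on $\sum_i c_i\langle\pi_i^\star,\hat g_i-g_i\rangle$ (Lemma~\ref{lem:omd_est_error2}), where the comparator does not cancel the $1/(\pi_i(a)+\gamma_i)$ factor. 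Modulo that harmless detour, the proposal is correct.
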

\begin{lem}[Lemma 10 of \cite{cai2023uncoupled}, Lemma 18 of \cite{bai2020near}]\label{lem:omd_est_error2}
    Let $c_1, c_2, \ldots, c_t$ be fixed positive numbers. Then for any sequence $\pi_1^{\star}, \ldots, \pi_t^{\star} \in \Delta_A$ such that $\pi_i^{\star}$ is $\mathcal{F}_{i-1}$-measurable, with probability at least $1-\delta$,
$$
\sum_{i=1}^t c_i\left\langle \pi_i^{\star}, \widehat{g}_i-g_i\right\rangle=\mathcal{O}\left(\max _{i \leq t} \frac{c_i \ln (1 / \delta)}{\gamma_t}\right) \,.
$$
\end{lem}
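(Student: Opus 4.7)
The plan is to control $\sum_{i=1}^t c_i\langle\pi_i^\star, \widehat{g}_i - g_i\rangle$ via a moment-generating-function argument tailored to the implicit exploration (IX) estimator, in the spirit of Neu (2015). The key feature that I want to exploit is that, although $\widehat{g}_i$ is biased, the $+\gamma_i$ in its denominator makes it almost surely bounded by $1/\gamma_i$ per coordinate, which turns the naïve linear bound on the martingale MGF into one that is independent of $t$.

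First I would split the sum into a zero-mean martingale part and a deterministic bias part: $\sum_{i=1}^t c_i\langle\pi_i^\star, \widehat{g}_i - g_i\rangle = \mathrm{(I)} + \mathrm{(II)}$, where $\mathrm{(I)}=\sum_{i}c_i\langle\pi_i^\star, \widehat{g}_i - \EE_{i-1}[\widehat{g}_i]\rangle$ and $\mathrm{(II)}=\sum_{i}c_i\langle\pi_i^\star, \EE_{i-1}[\widehat{g}_i] - g_i\rangle$. A direct computation from the IX definition gives $\EE_{i-1}[\widehat{g}_i(a)] = g_i(a)\,\pi_i(a)/(\pi_i(a)+\gamma_i)$, so $\mathrm{(II)} = -\sum_i c_i\sum_a \pi_i^\star(a)\,\gamma_i g_i(a)/(\pi_i(a)+\gamma_i)\leq 0$ since the cost vector is non-negative. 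Thus it suffices to produce a one-sided upper tail on $\mathrm{(I)}$.

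For $\mathrm{(I)}$, I would establish Neu's IX MGF inequality in the following form: for any $\beta\in(0, 2\gamma_i]$ and each coordinate $a$, $\EE_{i-1}\bigl[\exp(\beta\widehat{g}_i(a))\bigr]\leq 1 + \beta\,\EE_{i-1}[\widehat{g}_i(a)]$, obtained from $e^x \leq 1+x+ x^2/2$ on $x\in[0,1]$ and the a.s.~bound $\beta\widehat{g}_i(a)\leq 1$. Because $\pi_i^\star$ is $\mathcal{F}_{i-1}$-measurable, linearity and the simplex constraint let me lift this into $\EE_{i-1}\bigl[\exp(\beta\langle\pi_i^\star,\widehat{g}_i\rangle)\bigr] \leq \exp(\beta\langle\pi_i^\star,\EE_{i-1}[\widehat{g}_i]\rangle)$. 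Choosing a uniform $\lambda$ with $\lambda c_i\leq 2\gamma_i$ for every $i\leq t$ and setting $M_i = \exp\bigl(\lambda\sum_{j\leq i} c_j\langle\pi_j^\star,\widehat{g}_j-\EE_{j-1}[\widehat{g}_j]\rangle\bigr)$ then yields an exponential supermartingale, so $\EE[M_t]\leq 1$.

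Finally, Markov's inequality applied to $M_t\geq 1/\delta$ gives $\mathrm{(I)}\leq \ln(1/\delta)/\lambda$ with probability at least $1-\delta$. Taking the largest admissible $\lambda$, namely $\lambda=\min_{i\leq t}(2\gamma_i/c_i)$, together with the monotonicity convention $\gamma_t\leq\gamma_i$, yields the stated $\mathcal{O}\bigl(\max_{i\leq t} c_i\ln(1/\delta)/\gamma_t\bigr)$ bound. The main obstacle is getting Neu's MGF bound sharp enough that the quadratic remainder in the Taylor expansion is absorbed into the linear mean term; this is where the inequality $\widehat{g}_i(a)\leq 1/\gamma_i$, together with the algebraic identity $x/(1+x/2)\leq \log(1+x)$ on the scaled variable, is essential. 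A secondary but important technicality is verifying that only the \emph{upper} tail on $\mathrm{(I)}$ is needed, since $\mathrm{(II)}\leq 0$ automatically absorbs the downward IX bias and keeps the final bound one-sided.
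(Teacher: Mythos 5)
The paper itself does not prove this lemma; it imports it verbatim from Lemma 18 of Bai et al.\ (2020) (equivalently Lemma 10 of Cai et al.\ (2023)), so your attempt should be measured against the standard implicit-exploration argument of Neu used there. Your scaffolding matches it: an exponential supermartingale, Markov's inequality, and $\lambda=\min_{i\le t}2\gamma_i/c_i$ yielding $\max_{i\le t}c_i\ln(1/\delta)/\gamma_t$ under monotone $\gamma$. But the central MGF inequality you state is false, and the decomposition you set up makes it unprovable. By Jensen, $\EE_{i-1}[\exp(\beta\hat g_i(a))]\ge\exp(\beta\,\EE_{i-1}[\hat g_i(a)])\ge1+\beta\,\EE_{i-1}[\hat g_i(a)]$ for any nondegenerate estimator, so your claimed bound $\EE_{i-1}[\exp(\beta\hat g_i(a))]\le1+\beta\,\EE_{i-1}[\hat g_i(a)]$ cannot hold. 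Relatedly, $e^x\le1+x+x^2/2$ is false for every $x>0$ (the usable elementary bound is $e^x\le1+x+x^2$ for $x\le1$), and even with that correction the quadratic remainder $\beta^2\EE_{i-1}[\hat g_i(a)^2]$, which is of order $(\beta^2/\gamma_i)\EE_{i-1}[\hat g_i(a)]$, is comparable to the linear term when $\beta\asymp\gamma_i$; it is not absorbed and leaves an extra additive $O(\sum_i c_i)$ in the conclusion, which is not $O(\max_i c_i\ln(1/\delta)/\gamma_t)$.

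The root cause is splitting into the martingale part (I) and the bias part (II) and discarding (II) as nonpositive: the downward IX bias $g_i(a)-\EE_{i-1}[\hat g_i(a)]=\gamma_i g_i(a)/(\pi_i(a)+\gamma_i)$ is exactly the resource that cancels the variance of (I) inside the exponential moment, and once it is thrown away, (I) is a genuine zero-mean martingale with predictable variance of order $\sum_i c_i^2/\gamma_i$, for which Freedman-type bounds give a $\sqrt{(\sum_i c_i^2/\gamma_i)\ln(1/\delta)}$ term that is not controlled by $\max_i c_i\ln(1/\delta)/\gamma_t$ for the step sizes used in Theorems 2 and 3. The correct route --- which you name in your closing paragraph but do not actually use --- keeps $\hat g_i-g_i$ together: from $z/(1+z/2)\le\ln(1+z)$ with $z=2\gamma_i\sigma_i\mathbb{I}\{a_i=a\}/\pi_i(a)$ one gets $\hat g_i(a)\le\frac{1}{2\gamma_i}\ln\left(1+2\gamma_i\tilde g_i(a)\right)$ for the unbiased estimator $\tilde g_i(a)=\sigma_i\mathbb{I}\{a_i=a\}/\pi_i(a)$, hence $\EE_{i-1}[\exp(2\gamma_i\hat g_i(a))]\le1+2\gamma_i g_i(a)\le\exp(2\gamma_i g_i(a))$ with the \emph{true} gradient on the right and no variance remainder. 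Raising to the power $\lambda c_i\pi_i^\star(a)/(2\gamma_i)\le1$ via $(1+z)^p\le1+pz$, multiplying over $a$ (only one coordinate of $\hat g_i$ is nonzero), and telescoping gives $\EE[\exp(\lambda\sum_i c_i\langle\pi_i^\star,\hat g_i-g_i\rangle)]\le1$, after which your Markov step and your choice of $\lambda$ go through verbatim.
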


\begin{lem}[Lemma 4 of \cite{cai2023uncoupled}]
\label{lem:sequence1}
    Let $0<h<1,0 \leq k \leq 2$, and let $t \geq\left(\frac{24}{1-h} \ln \frac{12}{1-h}\right)^{\frac{1}{1-h}}$. Then
$$
\sum_{i=1}^t\left(i^{-k} \prod_{j=i+1}^t\left(1-j^{-h}\right)\right) \leq 9 \ln (t) t^{-k+h} \,.
$$
\end{lem}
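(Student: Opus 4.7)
The plan is to set $S_t := \sum_{i=1}^t i^{-k} \prod_{j=i+1}^t (1-j^{-h})$ and exploit the fact that the tail-product $\prod_{j=i+1}^t (1-j^{-h})$ decays like $\exp(-(t^{1-h}-i^{1-h})/(1-h))$, so only indices $i$ very close to $t$ contribute to the leading-order bound. Specifically, using $\log(1-x)\le -x$ for $x\in(0,1)$ and comparing the tail sum $\sum_{j=i+1}^t j^{-h}$ with the integral $\int_{i+1}^{t+1} x^{-h}\,dx$ yields
\begin{equation*}
\prod_{j=i+1}^t (1-j^{-h}) \;\le\; \exp\!\left(-\frac{(t+1)^{1-h}-(i+1)^{1-h}}{1-h}\right).
\end{equation*}

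Next I would split the sum at a threshold $i^\ast$ chosen so that $\frac{(t+1)^{1-h}-(i^\ast+1)^{1-h}}{1-h}\ge (k+2)\ln t$. For $i\le i^\ast$ the product is at most $t^{-(k+2)}$, and since $i^{-k}\le 1$ for $k\ge 0$, the ``head'' contribution is at most $i^\ast\cdot t^{-(k+2)}\le t^{-(k+1)}$, which is negligible compared to the target $t^{h-k}\ln t$. For the ``tail'' indices $i>i^\ast$, I would bound $i^{-k}\le (i^\ast)^{-k}$ and the product trivially by $1$; the number of terms $t-i^\ast$ is controlled by inverting the defining relation via a first-order Taylor expansion of $x\mapsto x^{1-h}$ around $x=t^{1-h}$, which gives $(i^\ast+1)^{1-h}\ge (t+1)^{1-h}-(k+2)(1-h)\ln t$ and hence $t-i^\ast = O\!\bigl((1-h)^{-1} t^h\ln t\bigr)$. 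Combining the two pieces yields $S_t = O(t^{h-k}\ln t)$, which matches the announced order.

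The main technical obstacle is pinning down the explicit constant $9$ and verifying that $i^\ast\in[t/2,t]$, so that $(i^\ast)^{-k}\le 2^k t^{-k}$ and the Taylor expansion of $x\mapsto x^{1-h}$ near $x=t^{1-h}$ is valid. Both of these demand the hypothesis $t^{1-h}\ge \frac{24}{1-h}\ln\frac{12}{1-h}$: it guarantees that the excursion $(k+2)(1-h)\ln t$ appearing in $(i^\ast)^{1-h}$ is much smaller than $t^{1-h}$, so the Taylor remainder is controlled and the integer rounding in the definition of $i^\ast$ inflates the constant only by a harmless factor. Tracking these constants carefully through both the head and tail estimates, together with the range restriction $k\le 2$ to keep $2^k$ bounded, produces the claimed $9\ln(t)\,t^{h-k}$.
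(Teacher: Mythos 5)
This lemma is not proved in the paper at all: it is imported verbatim as Lemma~4 of \cite{cai2023uncoupled}, so there is no in-paper argument to compare against. Your sketch follows the standard (and, as far as I can tell, the original) route: bound the product by $\exp(-\sum_{j=i+1}^{t} j^{-h})$, compare the exponent with the integral $\int_{i+1}^{t+1}x^{-h}\,dx$, and split the sum at a threshold $i^{\ast}$ beyond which the exponential is no longer negligibly small. The head/tail decomposition and the bound $i^{\ast}\cdot t^{-(k+2)}\le t^{-(k+1)}$ for the head are sound.

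There is one concrete quantitative slip worth fixing. Your tail count $t-i^{\ast}=O\bigl((1-h)^{-1}t^{h}\ln t\bigr)$ carries a spurious $(1-h)^{-1}$. By the mean value theorem, $(t+1)^{1-h}-(i^{\ast}+1)^{1-h}=(1-h)\xi^{-h}\,(t-i^{\ast})$ for some $\xi\in(i^{\ast}+1,\,t+1)$, so equating this with the target increment $(k+2)(1-h)\ln t$ gives $t-i^{\ast}\le (k+2)\,\xi^{h}\ln t\le (k+2)(t+1)^{h}\ln t$: the factor $(1-h)$ coming from the derivative of $x^{1-h}$ cancels exactly the $(1-h)$ you placed in the target. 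If the $(1-h)^{-1}$ is carried through literally, your final bound is $O\bigl((1-h)^{-1}t^{h-k}\ln t\bigr)$, which does \emph{not} establish the stated inequality, whose constant $9$ is uniform in $h$. With the cancellation restored, the tail contributes at most $(k+2)(t+1)^{h}\ln t\cdot (i^{\ast})^{-k}\le C\,2^{k}\,t^{h-k}\ln t$ using $i^{\ast}\ge t/2$ (which is where the hypothesis $t^{1-h}\ge\frac{24}{1-h}\ln\frac{12}{1-h}$ enters), and an $h$-independent constant is recoverable. Pinning it down to exactly $9$ still requires optimizing the threshold exponent and replacing the crude bound $(t-i^{\ast})(i^{\ast})^{-k}$ by an integral estimate of $\sum_{i>i^{\ast}}i^{-k}$, which you correctly flag as the remaining bookkeeping.
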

\begin{lem}[Lemma 4 of \cite{cai2023uncoupled}]
\label{lem:sequence2}
    Let $0<h<1,0 \leq k \leq 2$, and let $t \geq\left(\frac{24}{1-h} \ln \frac{12}{1-h}\right)^{\frac{1}{1-h}}$. Then
$$
\max _{1 \leq i \leq t}\left(i^{-k} \prod_{j=i+1}^t\left(1-j^{-h}\right)\right) \leq 4 t^{-k} \,.
$$
\end{lem}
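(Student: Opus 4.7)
The plan is to reduce the product to an exponential and then split the index range $\{1,\ldots,t\}$ into a ``large $i$'' regime where the polynomial factor dominates and a ``small $i$'' regime where the exponential decay of the product dominates. First I would use the standard inequality $1-x\le e^{-x}$ coordinatewise to write
\[
\prod_{j=i+1}^{t}(1-j^{-h})\;\le\;\exp\!\Bigl(-\sum_{j=i+1}^{t} j^{-h}\Bigr)\,,
\]
and then bound the sum from below by an integral, $\sum_{j=i+1}^{t} j^{-h}\ge \int_{i+1}^{t+1} x^{-h}\,dx = \bigl((t+1)^{1-h}-(i+1)^{1-h}\bigr)/(1-h)$. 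This reduces the problem to controlling a one-parameter expression in $i/t$.

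For the large-$i$ regime, say $i\ge t/2$, I would argue directly: $i^{-k}\le 2^{k}\,t^{-k}\le 4\,t^{-k}$ because $k\le 2$, and the product is trivially at most $1$. This case needs no assumption on $t$ and immediately gives the claimed bound.

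For the small-$i$ regime, $1\le i<t/2$, the factor $(i+1)^{1-h}$ is at most $(3/4)^{1-h}(t+1)^{1-h}$ once $t$ is not tiny, so the exponent is at least $c_h\,t^{1-h}/(1-h)$ with $c_h:=1-(3/4)^{1-h}>0$. Using $i^{-k}\le 1$, it then suffices to verify
\[
\exp\!\Bigl(-\tfrac{c_h\,t^{1-h}}{1-h}\Bigr)\;\le\;4\,t^{-k}\,,
\qquad\text{i.e.}\qquad k\log t \;\le\; \tfrac{c_h}{1-h}\,t^{1-h}+\log 4\,.
\]
This is precisely the role of the quantitative hypothesis on $t$: substituting $t_0=\bigl(\tfrac{24}{1-h}\ln\tfrac{12}{1-h}\bigr)^{1/(1-h)}$ one checks that $t_0^{\,1-h}/\log t_0$ is already of order $24$, and the map $t\mapsto t^{1-h}/\log t$ is increasing for $t$ in this range, so for all $t\ge t_0$ we have $t^{1-h}\gtrsim (1-h)\log t/c_h$ with enough slack to absorb the factor $k\le 2$ and the constant $\log 4$.

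The main obstacle is the small-$i$ regime: the exponent $c_h=1-(3/4)^{1-h}$ degenerates linearly as $h\uparrow 1$, which is exactly why the threshold $t_0$ carries a $1/(1-h)$ inside the logarithm and a $1/(1-h)$ outer exponent. The bookkeeping I would be careful with is verifying that the specific constants $24$ and $12$ in $t_0$ suffice; this is a short elementary calculation after Taylor expanding $c_h\approx (1-h)\log(4/3)$, but it is the place where the proof is most delicate. Everything else is mechanical: combining the two regimes yields the uniform bound $4\,t^{-k}$.
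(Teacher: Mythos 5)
The paper does not prove this lemma at all: it is imported verbatim as Lemma~4 of \cite{cai2023uncoupled} and used as a black box, so there is no in-paper argument to compare against. Your blind proof is a correct, self-contained derivation. The two-regime split is sound: for $i\ge t/2$ the bound $i^{-k}\le 2^k t^{-k}\le 4t^{-k}$ uses exactly the hypothesis $k\le 2$, and for $i<t/2$ the chain $\prod_{j=i+1}^t(1-j^{-h})\le\exp(-\sum_{j=i+1}^t j^{-h})\le\exp\bigl(-\bigl((t+1)^{1-h}-(i+1)^{1-h}\bigr)/(1-h)\bigr)$ together with $i+1\le \tfrac34(t+1)$ (valid for all $t\ge1$) gives the exponent $c_h t^{1-h}/(1-h)$ with $c_h=1-(3/4)^{1-h}$. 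The one place you leave as a sketch --- verifying that the threshold $t_0=\bigl(\tfrac{24}{1-h}\ln\tfrac{12}{1-h}\bigr)^{1/(1-h)}$ suffices --- does go through, and can be made cleaner than your Taylor expansion suggests: the function $s\mapsto(1-(3/4)^s)/s$ is decreasing on $(0,1]$, so $c_h/(1-h)\ge 1/4$ uniformly in $h$, and the requirement reduces to $u\ge\tfrac{8}{1-h}\ln u$ for $u=t^{1-h}$. Since $u\mapsto u/\ln u$ is increasing for $u>e$ and $t_0^{1-h}=\tfrac{24}{1-h}\ln\tfrac{12}{1-h}\ge 24\ln 12>e$, it suffices to check the threshold itself, where the inequality holds with a comfortable margin (the ratio $t_0^{1-h}/\ln t_0$ is at least about $14$, versus the needed $8/(1-h)\cdot(1-h)=8$ after the change of variables). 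So there is no gap; the proposal is complete modulo this routine arithmetic.
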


\section{Proof of Theorem \ref{thm:linear_Q}}

\begin{proof}
    Similar to Theorem \ref{thm:exact_Q}, we have
    \begin{align*}
       &  \KL\left( \pi^{\ast, \alpha}(\cdot \mid s_h^\alpha) , \pi_{t+1,h}^{\alpha}(\cdot \mid s_h^\alpha)\right) \\
        \leq \ &  \KL\left( \pi^{\ast, \alpha}(\cdot \mid s_h^\alpha) , \pi_{t,h}^{\alpha}(\cdot \mid s_h^\alpha)\right) + \frac{\eta_t^2H^2}{2\gamma_t^2} + \eta_t \left\langle \pi^{\ast, \alpha}(\cdot \mid s_h^\alpha) -\pi_{t,h}^{\alpha}(\cdot \mid s_h^\alpha) , Q^{\lambda, \alpha}_{h}\left(s_h^\alpha,\cdot, \pi^\beta_t, \mu^{\mathcal{I}}\right) +\lambda\log(\pi_{t,h}^{\alpha}(\cdot \mid s_h^\alpha)) \right\rangle  \\
     & \ + \eta_t \left\langle \pi^{\ast, \alpha}(\cdot \mid s_h^\alpha) -\pi_{t,h}^{\alpha}(\cdot \mid s_h^\alpha) , \hat{g}_{t,h}(s_h^\alpha, \cdot)- Q^{\lambda, \alpha}_{h}\left(s_h^\alpha,\cdot, \pi^\beta_t, \mu^{\mathcal{I}}\right) \right\rangle \,.
    \end{align*}
    For the last term, 
    \begin{align*}
    & \eta_t\left\langle \pi^{\ast, \alpha}(\cdot \mid s_h^\alpha) -\pi_{t,h}^{\alpha}(\cdot \mid s_h^\alpha) , \hat{g}_{t,h}(s_h^\alpha, \cdot)- Q^{\lambda, \alpha}_{h}\left(s_h^\alpha,\cdot, \pi^\beta_t, \mu^{\mathcal{I}}\right) \right\rangle \\
    = \ & \eta_t \left\langle  \pi^{\ast, \alpha}(\cdot \mid s_h^\alpha) -\pi_{t,h}^{\alpha}(\cdot \mid s_h^\alpha)  ,\frac{\mathbb{I}\{a_h = a\} \hat{Q}^{\lambda, \alpha}_{t,h}\left(s^\alpha_h, a^\alpha_h, \pi^\alpha_t, \mu^{\mathcal{I}}\right) }{\pi_{t,h}(s_h) + \gamma_t} - \frac{\mathbb{I}\{a_h = a\} Q^{\lambda, \alpha}_{t,h}\left(s^\alpha_h, a^\alpha_h, \pi^\alpha_t, \mu^{\mathcal{I}}\right) }{\pi_{t,h}(s_h) + \gamma_t}\right\rangle  \\
    + & \ \eta_t\left\langle  \pi^{\ast, \alpha}(\cdot \mid s_h^\alpha) -\pi_{t,h}^{\alpha}(\cdot \mid s_h^\alpha) ,\frac{\mathbb{I}\{a_h = a\}Q^{\lambda, \alpha}_{t,h}\left(s^\alpha_h, a^\alpha_h, \pi^\alpha_t, \mu^{\mathcal{I}}\right)}{\pi_{t,h}(s_h) + \gamma_t}- \EE_{\pi_t^\alpha}\left[Q^{\lambda, \alpha}_{t,h}\left(s^\alpha_h, a^\alpha_h, \pi^\alpha_t, \mu^{\mathcal{I}}\right)\right]\right\rangle  \\
    \leq \ & \frac{\eta_t}{\gamma_t} \left|\hat{Q}^{\lambda, \alpha}_{h}\left(s_h^\alpha, a_h^\alpha, \pi^\beta_t, \mu^{\mathcal{I}}\right) - Q^{\lambda, \alpha}_{h}\left(s_h^\alpha,a_h^\alpha, \pi^\beta_t, \mu^{\mathcal{I}}\right)\right| + \eta_t\epsilon_{t,h}  + \eta_t\zeta_{t,h} \,.
\end{align*}
where 
\begin{align*}
    \epsilon_{t,h} = \ & \left\langle \pi^{\ast, \alpha}(\cdot \mid s_h^\alpha)  ,\frac{\mathbb{I}\{a_h = a\}Q^{\lambda, \alpha}_{t,h}\left(s^\alpha_h, a^\alpha_h, \pi^\alpha_t, \mu^{\mathcal{I}}\right)}{\pi_{t,h}(s_h) + \gamma_t}- \EE_{\pi_t^\alpha}\left[Q^{\lambda, \alpha}_{t,h}\left(s^\alpha_h, a^\alpha_h, \pi^\alpha_t, \mu^{\mathcal{I}}\right)\right]\right\rangle \\
    \zeta_{t,h} = \ & \left\langle \pi_{t,h}^{\alpha}(\cdot \mid s_h^\alpha), \EE_{\pi_t^\alpha}\left[Q^{\lambda, \alpha}_{t,h}\left(s^\alpha_h, a^\alpha_h, \pi^\alpha_t, \mu^{\mathcal{I}}\right)\right]- \frac{\mathbb{I}\{a_h = a\}Q^{\lambda, \alpha}_{t,h}\left(s^\alpha_h, a^\alpha_h, \pi^\alpha_t, \mu^{\mathcal{I}}\right)}{\pi_{t,h}(s_h) + \gamma_t}\right\rangle \,.
\end{align*}

    By the update rule, for any $a^\alpha$, we have
    \begin{align*}
         \hat{Q}^{\lambda, \alpha}_{h}\left(s_h^\alpha, a_h^\alpha, \pi^\beta_t, \mu^{\mathcal{I}}\right) - Q^{\lambda, \alpha}_{h}\left(s_h^\alpha,a_h^\alpha, \pi^\beta_t, \mu^{\mathcal{I}}\right)
         = \ & \phi(s_h^\alpha, a_h^\alpha, z_h^\alpha)^\top \left(\hat{\theta}_{t,h} - \theta_h^\ast\right)^\top V^{\lambda, \alpha}_{h + 1} \left(s_{h + 1}^\alpha, \pi^\alpha, \mu^{\mathcal{I}}\right)\\
         \leq \ & \Tilde{O}\left(dH \|\phi(s^\alpha_h,a^\alpha_h,z_h^\alpha)\|_{\Lambda_{t,h}^{-1}}\right)\,,
    \end{align*}
    where the inequality is by Lemma \ref{lem:8_7_agarwal}.

    Summing over $H$ and taking integral over $\alpha$, we therefore have
    \begin{align*}
    & \int_\mathcal{I}\sum^H_{h=1} \EE_{\mu^{\ast,\alpha}_h}\left[\KL\left( \pi^{\ast, \alpha}(\cdot \mid s_h^\alpha) , \pi_{t+1,h}^{\alpha}(\cdot \mid s_h^\alpha)\right)   \right] d\nu(\alpha)\\
    \leq \ & \left(1-\eta_t\lambda \right)\int_\mathcal{I}\sum^H_{h=1} \EE_{\mu^{\ast,\alpha}_h}\left[\KL\left( \pi^{\ast, \alpha}(\cdot \mid s_h^\alpha) , \pi_{t,h}^{\alpha}(\cdot \mid s_h^\alpha)\right)\right]d\nu(\alpha)  + \frac{\eta_t^2H^2}{2\gamma_t^2} + \eta_t\int_\mathcal{I}\sum^H_{h=1} \left(\epsilon_{t,h}+\zeta_{t,h}\right)d\nu(\alpha)\\
    & \ + \eta_t\int_\mathcal{I}\left(J^{\lambda, \alpha} (\pi^{\ast, \alpha}, \mu_t^\mathcal{I}) - J^{\lambda, \alpha} (\pi^{\alpha}, \mu_t^\mathcal{I})\right)d\nu(\alpha) + \eta_t\int_\mathcal{I}\sum^H_{h=1} \EE_{\mu^{\ast,\alpha}_h}\left[\Tilde{O}\left(dH \|\phi(s^\alpha_h,a^\alpha_h,z_h^\alpha)\|_{\Lambda_{t,h}^{-1}}\right)\right] d\nu(\alpha)\\
    \leq \ & \left(1-\eta_t\lambda \right)\int_\mathcal{I}\sum^H_{h=1} \EE_{\mu^{\ast,\alpha}_h}\left[\KL\left( \pi^{\ast, \alpha}(\cdot \mid s_h^\alpha) , \pi_{t,h}^{\alpha}(\cdot \mid s_h^\alpha)\right)\right] d\nu(\alpha)  +\frac{\eta_t^2H^2}{2\gamma_t^2}+ \eta_t\int_\mathcal{I}\sum^H_{h=1} \left(\epsilon_{t,h}+\zeta_{t,h}\right)d\nu(\alpha) \\
    & \ + \frac{\eta_t}{\gamma_t}\int_\mathcal{I}\sum^H_{h=1} \EE_{\mu^{\ast,\alpha}_h}\left[\Tilde{O}\left(dH \|\phi(s^\alpha_h,a^\alpha_h,  z_h^\alpha)\|_{\Lambda_{t,h}^{-1}}\right)\right] d\nu(\alpha)\,,
    \end{align*}
    where the last inequality  by proposition \ref{prop:monotone}, $\int_\mathcal{I}J^{\lambda, \alpha} (\pi^{\ast, \alpha}, \mu_t^\mathcal{I}) - J^{\lambda, \alpha} (\pi^{\alpha}, \mu_t^\mathcal{I}) d\nu(\alpha)\leq \int_\mathcal{I} J^{\lambda, \alpha} (\pi^{\ast, \alpha}, \mu_t^{\ast,\mathcal{I}}) - J^{\lambda, \alpha} (\pi^{\alpha}, \mu_t^{\ast,\mathcal{I}}) d\nu(\alpha)\leq 0$, by optimality.

    Set $\eta_0 \lambda = 1$, define $\omega_t^k = \prod^t_{j=k+1} \left(1-\eta_j\lambda \right)$, we have
    \begin{align*}
    D\left(\pi^\mathcal{I}_{t+1}\right)
    \leq  \ & (1-\eta_t\lambda) D\left(\pi^\mathcal{I}_{t}\right) + \frac{\eta_t^2H^2}{2\gamma_t^2}
    + \eta_t\int_\mathcal{I}\sum^H_{h=1} \EE_{\mu^{\ast,\alpha}_h}\left[\Tilde{O}\left(dH \|\phi(s^\alpha_{k,h},a^\alpha_{k,h}, z_{k,h}^\alpha)\|_{\Lambda_{t,h}^{-1}}\right)\right]d\nu(\alpha) \\
    & \ + \eta_t\int_\mathcal{I}\sum^H_{h=1} \left(\epsilon_{t,h}+\zeta_{t,h}\right)d\nu(\alpha)\\
    \leq \ &\int_\mathcal{I}\sum^t_{k=1}\omega_t^k\left(\frac{\eta_k^2H^3}{2\gamma_k^2} \right)d\nu(\alpha)+ \int_\mathcal{I}\sum^H_{h=1} \sum^t_{k=1}\eta_k \omega_t^k\EE_{\mu^{\ast,\alpha}_h}\left[\Tilde{O}\left(dH \|\phi(s^\alpha_{k,h},a^\alpha_{k,h}, z_{k,h}^\alpha)\|_{\Lambda_{t,h}^{-1}}\right)\right] d\nu(\alpha)\\
    & \ + \int_\mathcal{I}\sum^H_{h=1} \sum^t_{k=1} \eta_k \omega_t^k\left(\epsilon_{k,h}+\zeta_{k,h}\right)  d\nu(\alpha)\,.
    \end{align*}

    Similar to Theorem \ref{thm:approx_Q}, take $\eta_k = \frac{1}{\lambda k^{c_\eta}}$, $\gamma_k = \frac{1}{k^{c_\gamma}}$ we have
    \begin{align*}
        \int_\mathcal{I}\sum^H_{h=1} \sum^t_{k=1} \eta_k \omega_t^k\left(\epsilon_{k,h}+\zeta_{k,h}\right)  d\nu(\alpha)
        = \ &   \int_\mathcal{I}O\left(A\sum^t_{k=1}\gamma_k\eta_k\omega_t^k + \sqrt{\log(A/\delta)\sum^t_{k=1}\eta_k^2(\omega_t^k)^2}\right)   d\nu(\alpha)\\
        = \ & O \left(\frac{A\log(t)}{ t^{c_\gamma}} + \sqrt{\frac{\log(A/\delta)}{t^{2c_\gamma}}}\right)  \,,
    \end{align*}
    and 
    \begin{align*}
         \sum^t_{k=1}\omega_t^k\left(\frac{\eta_k^2H^3}{2\gamma_k^2} \right)
    =\ &  O \left(\frac{H^3}{t^{c_\eta - 2c_\gamma }} \right) \,.
    \end{align*}

    By Cauchy-Schwarz and Jensen's inequalities, we have
    \begin{align*}
        & \int_\mathcal{I}\sum^H_{h=1} \sum^t_{k=1}\eta_k \omega_t^k\EE_{\mu^{\ast,\alpha}_h}\left[dH \|\phi(s^\alpha_{k,h},a^\alpha_{k,h}, z_{k,h}^\alpha)\|_{\Lambda_{t,h}^{-1}}\right] d\nu(\alpha) \\
        = \ &  dH\int_\mathcal{I}\sum^H_{h=1} \sum^t_{k=1}\eta_k \omega_t^k\EE_{\mu^{\ast,\alpha}_h}\left[ \sqrt{\phi(s^\alpha_{k,h},a^\alpha_{k,h}, z_{k,h}^\alpha)^\top \Lambda_{t,h}^{-1}\phi(s^\alpha_{k,h},a^\alpha_{k,h}, z_{k,h}^\alpha)}\right] d\nu(\alpha) \\
        \leq \ &  dH\int_\mathcal{I}\sum^H_{h=1} \sum^t_{k=1}\eta_k \omega_t^k\sqrt{\EE_{\mu^{\ast,\alpha}_h}\left[ \phi(s^\alpha_{k,h},a^\alpha_{k,h}, z_{k,h}^\alpha)^\top \Lambda_{t,h}^{-1}\phi(s^\alpha_{k,h},a^\alpha_{k,h}, z_{k,h}^\alpha)\right]} d\nu(\alpha) \\
        \leq \ &  dH\int_\mathcal{I}\sum^H_{h=1} \sqrt{\sum^t_{k=1}\left(\eta_k \omega_t^k\right)^2}\sqrt{\sum^t_{k=1}\EE_{\mu^{\ast,\alpha}_h}\left[\phi(s^\alpha_{k,h},a^\alpha_{k,h}, z_{k,h}^\alpha)^\top \Lambda_{t,h}^{-1}\phi(s^\alpha_{k,h},a^\alpha_{k,h}, z_{k,h}^\alpha)\right]}  d\nu(\alpha)
    \end{align*}

    Then
    \begin{align*}
        & \sum^t_{k=1}\EE_{\mu^{\ast,\alpha}_h}\left[ \phi(s^\alpha_{k,h},a^\alpha_{k,h}, z_{k,h}^\alpha)^\top \Lambda_{t,h}^{-1}\phi(s^\alpha_{k,h},a^\alpha_{k,h}, z_{k,h}^\alpha)\right] \\
        \leq \ & 2\sum^t_{k=1}\log\left(1+\EE_{\mu^{\ast,\alpha}_h}\left[ \phi(s^\alpha_{k,h},a^\alpha_{k,h}, z_{k,h}^\alpha)^\top \Lambda_{t,h}^{-1}\phi(s^\alpha_{k,h},a^\alpha_{k,h}, z_{k,h}^\alpha)\right] \right) \,.
    \end{align*}

    By the definition of $\Lambda$, have
    \begin{align*}
        \det(\Lambda_{k,h}) 
        = \ & \det(I)\prod^t_{k=1}\left(1 + \|\phi(s^\alpha_{k,h},a^\alpha_{k,h}, z_{k,h}^\alpha)\|_{\Lambda_{t,h}^{-1}}^2\right)\,.
    \end{align*}
    Therefore, 
    \begin{align*}\sum^t_{k=1}\log\left(1+\EE_{\mu^{\ast,\alpha}_h}\left[ \phi(s^\alpha_{k,h},a^\alpha_{k,h}, z_{k,h}^\alpha)^\top \Lambda_{t,h}^{-1}\phi(s^\alpha_{k,h},a^\alpha_{k,h}, z_{k,h}^\alpha)\right] \right)= \ & \log\left(\frac{\det(\Lambda_{k,h}) }{\det(I)}\right)\leq d\log\left(\frac{t+1}{d\tau}\right)\,.
    \end{align*}

    By Lemma \ref{lem:sequence1} with $c_\eta < 1$, we have
    \begin{align*}
         \sqrt{\sum^t_{k=1}\left(\eta_k \omega_t^k\right)^2} \leq \sqrt{\sum^t_{k=1}\eta_k^2 \omega_t^k} \leq \frac{\sqrt{\lambda}}{t^{c_\eta/2}}\,.
    \end{align*}

    Therefore, 
    \begin{align*}
         \int_\mathcal{I}\sum^H_{h=1} \sum^t_{k=1}\eta_k \omega_t^k\EE_{\mu^{\ast,\alpha}_h}\left[dH \|\phi(s^\alpha_{k,h},a^\alpha_{k,h}, z_{k,h}^\alpha)\|_{\Lambda_{t,h}^{-1}}\right] d\nu(\alpha) 
         \leq \frac{\sqrt{\lambda}d^2H^2\log(t+1/d\tau)}{t^{c_\eta/2}} \,,
    \end{align*}
    and 
    \begin{align*}
         D\left(\pi^\mathcal{I}_{t+1}\right)\leq O \left(\frac{H^3}{t^{c_\eta - 2c_\gamma }} + \frac{\sqrt{\lambda}d^2H^2\log(t+1/d\tau)}{t^{c_\eta/2 - c_\gamma}} + \frac{A\log(t)}{ t^{c_\gamma}} + \sqrt{\frac{\log(A/\delta)}{t^{2c_\gamma}}}\right) \,.
    \end{align*}

    Take $c_\eta = 4/5, c_\gamma = 1/5$, we have
    \begin{align*}
         D\left(\pi^\mathcal{I}_{t+1}\right)\leq O \left(\frac{H^3}{t^{3/5}} + \frac{\sqrt{\lambda}d^2H^2\log(t+1/d\tau)}{t^{1/5}} + \frac{A\log(t)}{ t^{1/5}} + \sqrt{\frac{\log(A/\delta)}{t^{1/5}}}\right) \,.
    \end{align*}
\end{proof}

\begin{lem}[Lemma 8.7 of \cite{agarwal2019reinforcement}]\label{lem:8_7_agarwal}
For any $h \in [H], t \in [T]$, $s, a, z$, and fix a $\delta \in (0,1)$, with a probability of at least $1 - \delta$, we have
    \begin{align*}
        \phi(s_h^\alpha, a_h^\alpha, z_h^\alpha)^\top \left(\hat{\theta}_{t,h} - \theta_h^\ast\right)^\top V^{\lambda, \alpha}_{h + 1} \left(s_{h + 1}^\alpha, \pi^\alpha, \mu^{\mathcal{I}}\right) \leq \Tilde{O}\left(dH \|\phi(s^\alpha_h,a^\alpha_h,z_h^\alpha)\|_{\Lambda_{t,h}^{-1}}\right)\,,
    \end{align*}
    where $\Tilde{O}$ hides the logarithmic dependency on $t, H$ and $1/\delta$.
\end{lem}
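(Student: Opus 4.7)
\medskip

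\noindent\textbf{Proof plan for Lemma \ref{lem:8_7_agarwal}.}

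The plan is to follow the standard self-normalized analysis for least-squares parameter recovery in linear MDPs. I will first rewrite the deviation $\phi^\top(\hat\theta_{t,h}-\theta_h^*)^\top V^{\lambda,\alpha}_{h+1}$ as the sum of a stochastic martingale term and a deterministic ridge-bias term, then control each piece separately. Write $\phi_j=\phi(s_{j,h}^\alpha,a_{j,h}^\alpha,z_{j,h}^\alpha)$ and, for brevity, $V\equiv V^{\lambda,\alpha}_{h+1}(\cdot,\pi^\alpha,\mu^{\mathcal I})\in\mathbb R^{|\mathcal S|}$ (viewed as a vector indexed by the next state). Starting from the closed form \eqref{eq:close_form},
\begin{align*}
\hat\theta_{t,h}^\top V \;=\; \Lambda_t^{-1}\!\sum_{j=1}^{t-1}\phi_j\,V(s_{j,h+1}^\alpha)
\;=\; \Lambda_t^{-1}\!\sum_{j=1}^{t-1}\phi_j\,\phi_j^\top\theta_h^{*\top}V \;+\;\Lambda_t^{-1}\!\sum_{j=1}^{t-1}\phi_j\,\epsilon_j,
\end{align*}
where $\epsilon_j:=V(s_{j,h+1}^\alpha)-\EE[V(s_{j,h+1}^\alpha)\mid\mathcal H_{j,h}]$ is a martingale difference bounded by $H$ (since costs lie in $[0,1]$ and the $\lambda$-regularized $V$ satisfies $\|V\|_\infty\le H$). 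Using $\sum_j\phi_j\phi_j^\top=\Lambda_t-I$, this simplifies to
\begin{align*}
\hat\theta_{t,h}^\top V-\theta_h^{*\top}V \;=\; \Lambda_t^{-1}\!\sum_{j=1}^{t-1}\phi_j\epsilon_j \;-\; \Lambda_t^{-1}\theta_h^{*\top}V.
\end{align*}

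Next I would apply Cauchy--Schwarz in the $\Lambda_t^{-1}$ inner product to both pieces:
\begin{align*}
\bigl|\phi^\top(\hat\theta_{t,h}-\theta_h^*)^\top V\bigr|
\;\le\; \|\phi\|_{\Lambda_t^{-1}}\!\left(\Bigl\|\textstyle\sum_{j=1}^{t-1}\phi_j\epsilon_j\Bigr\|_{\Lambda_t^{-1}} \;+\; \|\theta_h^{*\top}V\|_{\Lambda_t^{-1}}\right).
\end{align*}
The regularization term is handled by the linear GMFG assumption: since $\|V/H\|_\infty\le 1$, the bound $\|v^\top\theta_h^*\|\le\sqrt d$ for $\|v\|_\infty\le 1$ yields $\|\theta_h^{*\top}V\|_2\le H\sqrt d$, and $\Lambda_t\succeq I$ gives $\|\theta_h^{*\top}V\|_{\Lambda_t^{-1}}\le H\sqrt d$. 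For the stochastic term I would invoke the self-normalized concentration inequality of Abbasi-Yadkori, Pál, and Szepesvári: since $\phi_j$ is $\mathcal H_{j,h}$-measurable and $\epsilon_j$ is conditionally $H$-sub-Gaussian, with probability at least $1-\delta$,
\begin{align*}
\Bigl\|\textstyle\sum_{j=1}^{t-1}\phi_j\epsilon_j\Bigr\|_{\Lambda_t^{-1}}^{2}
\;\le\; 2H^2\log\!\Bigl(\tfrac{\det(\Lambda_t)^{1/2}}{\delta}\Bigr)
\;\le\; H^2\bigl(d\log\!\tfrac{t}{d}+\log\!\tfrac1\delta\bigr),
\end{align*}
where the last step uses the standard determinant-trace inequality together with $\|\phi_j\|_2\le 1$. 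Combining the two bounds gives the claimed $\tilde O(dH\,\|\phi\|_{\Lambda_t^{-1}})$.

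The main subtlety — and the step I would flag as the main obstacle — is that the value function $V^{\lambda,\alpha}_{h+1}(\cdot,\pi^\alpha,\mu^{\mathcal I})$ appearing in the lemma is itself induced by a policy $\pi^\alpha$ that, in the downstream use inside Theorem~\ref{thm:linear_Q}, will actually be $\pi_t^\alpha$ and hence data-dependent; a naive application of Abbasi-Yadkori breaks because $\epsilon_j$ is no longer a martingale difference with respect to $\mathcal H_{j,h}$ for a data-dependent $V$. To patch this I would follow the standard covering argument for linear MDPs: the family of value functions $V^{\lambda,\alpha}_{h+1}$ generated by the softmax policies produced by Algorithm~\ref{alg} is parameterized, through the value iteration in Section~4, by at most $O(dH)$ real parameters of bounded norm, so admits an $\epsilon$-net of size $\log\mathcal N_\epsilon=\tilde O(d)$. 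I would apply the self-normalized bound to each fixed element of the cover, union-bound over the net, and absorb the discretization error into the $\tilde O$ by choosing $\epsilon=1/t$, which only inflates the constant inside the logarithm and keeps the final bound at $\tilde O(dH\,\|\phi\|_{\Lambda_t^{-1}})$. Everything else is a routine composition of the steps above.
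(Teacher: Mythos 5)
The paper does not actually prove this lemma: it is imported verbatim as Lemma 8.7 of \cite{agarwal2019reinforcement} and invoked as a black box inside the proof of Theorem \ref{thm:linear_Q}, so there is no in-paper argument to compare against. Your proposal reconstructs the standard proof that underlies the cited result: the decomposition of $\hat\theta_{t,h}^\top V-\theta_h^{*\top}V$ into a ridge-bias term $-\Lambda_t^{-1}\theta_h^{*\top}V$ and a self-normalized martingale term $\Lambda_t^{-1}\sum_j\phi_j\epsilon_j$ is exactly right given the closed form \eqref{eq:close_form} and $\sum_j\phi_j\phi_j^\top=\Lambda_t-I$; the bias is correctly controlled by $\Lambda_t\succeq I$ together with the norm assumption on $\theta_h^*$; and the Abbasi-Yadkori--P\'al--Szepesv\'ari bound plus the determinant-trace inequality gives the stochastic term. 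Note that your fixed-$V$ bound comes out as $\tilde O(\sqrt{d}\,H\|\phi\|_{\Lambda_{t,h}^{-1}})$, which is consistent with (indeed stronger than) the stated $\tilde O(dH\|\phi\|_{\Lambda_{t,h}^{-1}})$; the extra factor of $\sqrt d$ in the cited lemma comes precisely from the union bound over the value-function cover.

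The one place where your argument is a sketch rather than a proof is the covering step, and you are right to flag it as the crux: in Theorem \ref{thm:linear_Q} the lemma is applied to $V^{\lambda,\alpha}_{h+1}(\cdot,\pi_t^\alpha,\mu^{\mathcal I})$, which is data-dependent, so the martingale structure only survives after a uniform bound over the value class. Your claim that this class is parameterized by $O(dH)$ reals with $\log\mathcal N_\epsilon=\tilde O(d)$ is asserted, not verified, and it is not immediate here: the value functions generated by Algorithm \ref{alg} are of the form $\langle Q,\pi\rangle+\lambda\langle\log\pi,\pi\rangle$ where $\pi$ is a softmax of accumulated importance-weighted gradient estimates, which is a richer class than the truncated-linear-plus-bonus functions covered by Lemma 8.7 of \cite{agarwal2019reinforcement}. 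Since the covering number enters only logarithmically, establishing any polynomial-in-$(d,H,A,t)$ bound on $\mathcal N_\epsilon$ for this class would complete the argument without changing the rate, but that bound does need to be written down; as it stands, both your proposal and the paper's citation leave this applicability gap open.
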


\newpage
\section*{NeurIPS Paper Checklist}

\begin{enumerate}

\item {\bf Claims}
    \item[] Question: Do the main claims made in the abstract and introduction accurately reflect the paper's contributions and scope?
    \item[] Answer:  \answerYes{} 
    \item[] Justification: The abstract and introduction accurately and clearly state the main theoretical claims made and discuss the main contributions and scope.
    \item[] Guidelines:
    \begin{itemize}
        \item The answer NA means that the abstract and introduction do not include the claims made in the paper.
        \item The abstract and/or introduction should clearly state the claims made, including the contributions made in the paper and important assumptions and limitations. A No or NA answer to this question will not be perceived well by the reviewers. 
        \item The claims made should match theoretical and experimental results, and reflect how much the results can be expected to generalize to other settings. 
        \item It is fine to include aspirational goals as motivation as long as it is clear that these goals are not attained by the paper. 
    \end{itemize}

\item {\bf Limitations}
    \item[] Question: Does the paper discuss the limitations of the work performed by the authors?
    \item[] Answer: \answerYes{} 
    \item[] Justification: The paper clearly states the assumption made and gives examples of when the assumptions are satisfied. The paper also states clearly of the experimental settings, the computational resources needed, etc. 
    \item[] Guidelines:
    \begin{itemize}
        \item The answer NA means that the paper has no limitation while the answer No means that the paper has limitations, but those are not discussed in the paper. 
        \item The authors are encouraged to create a separate "Limitations" section in their paper.
        \item The paper should point out any strong assumptions and how robust the results are to violations of these assumptions (e.g., independence assumptions, noiseless settings, model well-specification, asymptotic approximations only holding locally). The authors should reflect on how these assumptions might be violated in practice and what the implications would be.
        \item The authors should reflect on the scope of the claims made, e.g., if the approach was only tested on a few datasets or with a few runs. In general, empirical results often depend on implicit assumptions, which should be articulated.
        \item The authors should reflect on the factors that influence the performance of the approach. For example, a facial recognition algorithm may perform poorly when image resolution is low or images are taken in low lighting. Or a speech-to-text system might not be used reliably to provide closed captions for online lectures because it fails to handle technical jargon.
        \item The authors should discuss the computational efficiency of the proposed algorithms and how they scale with dataset size.
        \item If applicable, the authors should discuss possible limitations of their approach to address problems of privacy and fairness.
        \item While the authors might fear that complete honesty about limitations might be used by reviewers as grounds for rejection, a worse outcome might be that reviewers discover limitations that aren't acknowledged in the paper. The authors should use their best judgment and recognize that individual actions in favor of transparency play an important role in developing norms that preserve the integrity of the community. Reviewers will be specifically instructed to not penalize honesty concerning limitations.
    \end{itemize}

\item {\bf Theory Assumptions and Proofs}
    \item[] Question: For each theoretical result, does the paper provide the full set of assumptions and a complete (and correct) proof?
    \item[] Answer: \answerYes{} 
    \item[] Justification: All assumptions are clearly stated, and all proofs are included in the appendix. All theorems and lemmas are properly referenced. 
    \item[] Guidelines:
    \begin{itemize}
        \item The answer NA means that the paper does not include theoretical results. 
        \item All the theorems, formulas, and proofs in the paper should be numbered and cross-referenced.
        \item All assumptions should be clearly stated or referenced in the statement of any theorems.
        \item The proofs can either appear in the main paper or the supplemental material, but if they appear in the supplemental material, the authors are encouraged to provide a short proof sketch to provide intuition. 
        \item Inversely, any informal proof provided in the core of the paper should be complemented by formal proofs provided in appendix or supplemental material.
        \item Theorems and Lemmas that the proof relies upon should be properly referenced. 
    \end{itemize}

    \item {\bf Experimental Result Reproducibility}
    \item[] Question: Does the paper fully disclose all the information needed to reproduce the main experimental results of the paper to the extent that it affects the main claims and/or conclusions of the paper (regardless of whether the code and data are provided or not)?
    \item[] Answer: \answerYes{} 
    \item[] Justification: All implementation details are given in the experiment section. Code will be released upon acceptance of this paper.
    \item[] Guidelines:
    \begin{itemize}
        \item The answer NA means that the paper does not include experiments.
        \item If the paper includes experiments, a No answer to this question will not be perceived well by the reviewers: Making the paper reproducible is important, regardless of whether the code and data are provided or not.
        \item If the contribution is a dataset and/or model, the authors should describe the steps taken to make their results reproducible or verifiable. 
        \item Depending on the contribution, reproducibility can be accomplished in various ways. For example, if the contribution is a novel architecture, describing the architecture fully might suffice, or if the contribution is a specific model and empirical evaluation, it may be necessary to either make it possible for others to replicate the model with the same dataset, or provide access to the model. In general. releasing code and data is often one good way to accomplish this, but reproducibility can also be provided via detailed instructions for how to replicate the results, access to a hosted model (e.g., in the case of a large language model), releasing of a model checkpoint, or other means that are appropriate to the research performed.
        \item While NeurIPS does not require releasing code, the conference does require all submissions to provide some reasonable avenue for reproducibility, which may depend on the nature of the contribution. For example
        \begin{enumerate}
            \item If the contribution is primarily a new algorithm, the paper should make it clear how to reproduce that algorithm.
            \item If the contribution is primarily a new model architecture, the paper should describe the architecture clearly and fully.
            \item If the contribution is a new model (e.g., a large language model), then there should either be a way to access this model for reproducing the results or a way to reproduce the model (e.g., with an open-source dataset or instructions for how to construct the dataset).
            \item We recognize that reproducibility may be tricky in some cases, in which case authors are welcome to describe the particular way they provide for reproducibility. In the case of closed-source models, it may be that access to the model is limited in some way (e.g., to registered users), but it should be possible for other researchers to have some path to reproducing or verifying the results.
        \end{enumerate}
    \end{itemize}

\item {\bf Open access to data and code}
    \item[] Question: Does the paper provide open access to the data and code, with sufficient instructions to faithfully reproduce the main experimental results, as described in supplemental material?
    \item[] Answer: \answerNo{} 
    \item[] Justification: We will provide open access to the code and data upon acceptance of this paper. 
    \item[] Guidelines:
    \begin{itemize}
        \item The answer NA means that paper does not include experiments requiring code.
        \item Please see the NeurIPS code and data submission guidelines (\url{https://nips.cc/public/guides/CodeSubmissionPolicy}) for more details.
        \item While we encourage the release of code and data, we understand that this might not be possible, so “No” is an acceptable answer. Papers cannot be rejected simply for not including code, unless this is central to the contribution (e.g., for a new open-source benchmark).
        \item The instructions should contain the exact command and environment needed to run to reproduce the results. See the NeurIPS code and data submission guidelines (\url{https://nips.cc/public/guides/CodeSubmissionPolicy}) for more details.
        \item The authors should provide instructions on data access and preparation, including how to access the raw data, preprocessed data, intermediate data, and generated data, etc.
        \item The authors should provide scripts to reproduce all experimental results for the new proposed method and baselines. If only a subset of experiments are reproducible, they should state which ones are omitted from the script and why.
        \item At submission time, to preserve anonymity, the authors should release anonymized versions (if applicable).
        \item Providing as much information as possible in supplemental material (appended to the paper) is recommended, but including URLs to data and code is permitted.
    \end{itemize}

\item {\bf Experimental Setting/Details}
    \item[] Question: Does the paper specify all the training and test details (e.g., data splits, hyperparameters, how they were chosen, type of optimizer, etc.) necessary to understand the results?
    \item[] Answer: \answerYes{} 
    \item[] Justification: The experimental settings are described in the experiment settings in detail to reproduce the paper. 
    \item[] Guidelines:
    \begin{itemize}
        \item The answer NA means that the paper does not include experiments.
        \item The experimental setting should be presented in the core of the paper to a level of detail that is necessary to appreciate the results and make sense of them.
        \item The full details can be provided either with the code, in appendix, or as supplemental material.
    \end{itemize}

\item {\bf Experiment Statistical Significance}
    \item[] Question: Does the paper report error bars suitably and correctly defined or other appropriate information about the statistical significance of the experiments?
    \item[] Answer: \answerYes{} 
    \item[] Justification: Yes, all figures for the experiments are shown with shading the one standard deviation error. 
    \item[] Guidelines:
    \begin{itemize}
        \item The answer NA means that the paper does not include experiments.
        \item The authors should answer "Yes" if the results are accompanied by error bars, confidence intervals, or statistical significance tests, at least for the experiments that support the main claims of the paper.
        \item The factors of variability that the error bars are capturing should be clearly stated (for example, train/test split, initialization, random drawing of some parameter, or overall run with given experimental conditions).
        \item The method for calculating the error bars should be explained (closed form formula, call to a library function, bootstrap, etc.)
        \item The assumptions made should be given (e.g., Normally distributed errors).
        \item It should be clear whether the error bar is the standard deviation or the standard error of the mean.
        \item It is OK to report 1-sigma error bars, but one should state it. The authors should preferably report a 2-sigma error bar than state that they have a 96\% CI, if the hypothesis of Normality of errors is not verified.
        \item For asymmetric distributions, the authors should be careful not to show in tables or figures symmetric error bars that would yield results that are out of range (e.g. negative error rates).
        \item If error bars are reported in tables or plots, The authors should explain in the text how they were calculated and reference the corresponding figures or tables in the text.
    \end{itemize}

\item {\bf Experiments Compute Resources}
    \item[] Question: For each experiment, does the paper provide sufficient information on the computer resources (type of compute workers, memory, time of execution) needed to reproduce the experiments?
    \item[] Answer: \answerYes{} 
    \item[] Justification: All computational resources used have been stated in the experiment section. 
    \item[] Guidelines:
    \begin{itemize}
        \item The answer NA means that the paper does not include experiments.
        \item The paper should indicate the type of compute workers CPU or GPU, internal cluster, or cloud provider, including relevant memory and storage.
        \item The paper should provide the amount of compute required for each of the individual experimental runs as well as estimate the total compute. 
        \item The paper should disclose whether the full research project required more compute than the experiments reported in the paper (e.g., preliminary or failed experiments that didn't make it into the paper). 
    \end{itemize}
    
\item {\bf Code Of Ethics}
    \item[] Question: Does the research conducted in the paper conform, in every respect, with the NeurIPS Code of Ethics \url{https://neurips.cc/public/EthicsGuidelines}?
    \item[] Answer: \answerYes{} 
    \item[] Justification: I have reviewed the code of ethics. 
    \item[] Guidelines:
    \begin{itemize}
        \item The answer NA means that the authors have not reviewed the NeurIPS Code of Ethics.
        \item If the authors answer No, they should explain the special circumstances that require a deviation from the Code of Ethics.
        \item The authors should make sure to preserve anonymity (e.g., if there is a special consideration due to laws or regulations in their jurisdiction).
    \end{itemize}

\item {\bf Broader Impacts}
    \item[] Question: Does the paper discuss both potential positive societal impacts and negative societal impacts of the work performed?
    \item[] Answer: \answerNA{} 
    \item[] Justification: The paper is foundational and theoretical research without particular application or deployment. 
    \item[] Guidelines:
    \begin{itemize}
        \item The answer NA means that there is no societal impact of the work performed.
        \item If the authors answer NA or No, they should explain why their work has no societal impact or why the paper does not address societal impact.
        \item Examples of negative societal impacts include potential malicious or unintended uses (e.g., disinformation, generating fake profiles, surveillance), fairness considerations (e.g., deployment of technologies that could make decisions that unfairly impact specific groups), privacy considerations, and security considerations.
        \item The conference expects that many papers will be foundational research and not tied to particular applications, let alone deployments. However, if there is a direct path to any negative applications, the authors should point it out. For example, it is legitimate to point out that an improvement in the quality of generative models could be used to generate deepfakes for disinformation. On the other hand, it is not needed to point out that a generic algorithm for optimizing neural networks could enable people to train models that generate Deepfakes faster.
        \item The authors should consider possible harms that could arise when the technology is being used as intended and functioning correctly, harms that could arise when the technology is being used as intended but gives incorrect results, and harms following from (intentional or unintentional) misuse of the technology.
        \item If there are negative societal impacts, the authors could also discuss possible mitigation strategies (e.g., gated release of models, providing defenses in addition to attacks, mechanisms for monitoring misuse, mechanisms to monitor how a system learns from feedback over time, improving the efficiency and accessibility of ML).
    \end{itemize}
    
\item {\bf Safeguards}
    \item[] Question: Does the paper describe safeguards that have been put in place for responsible release of data or models that have a high risk for misuse (e.g., pretrained language models, image generators, or scraped datasets)?
    \item[] Answer: \answerNA{} 
    \item[] Justification: The paper is theoretical with no such risk.
    \item[] Guidelines:
    \begin{itemize}
        \item The answer NA means that the paper poses no such risks.
        \item Released models that have a high risk for misuse or dual-use should be released with necessary safeguards to allow for controlled use of the model, for example by requiring that users adhere to usage guidelines or restrictions to access the model or implementing safety filters. 
        \item Datasets that have been scraped from the Internet could pose safety risks. The authors should describe how they avoided releasing unsafe images.
        \item We recognize that providing effective safeguards is challenging, and many papers do not require this, but we encourage authors to take this into account and make a best faith effort.
    \end{itemize}

\item {\bf Licenses for existing assets}
    \item[] Question: Are the creators or original owners of assets (e.g., code, data, models), used in the paper, properly credited and are the license and terms of use explicitly mentioned and properly respected?
    \item[] Answer: \answerYes{} 
    \item[] Justification: The experimental environment used in this paper has been clearly referenced to the original paper. 
    \item[] Guidelines:
    \begin{itemize}
        \item The answer NA means that the paper does not use existing assets.
        \item The authors should cite the original paper that produced the code package or dataset.
        \item The authors should state which version of the asset is used and, if possible, include a URL.
        \item The name of the license (e.g., CC-BY 4.0) should be included for each asset.
        \item For scraped data from a particular source (e.g., website), the copyright and terms of service of that source should be provided.
        \item If assets are released, the license, copyright information, and terms of use in the package should be provided. For popular datasets, \url{paperswithcode.com/datasets} has curated licenses for some datasets. Their licensing guide can help determine the license of a dataset.
        \item For existing datasets that are re-packaged, both the original license and the license of the derived asset (if it has changed) should be provided.
        \item If this information is not available online, the authors are encouraged to reach out to the asset's creators.
    \end{itemize}

\item {\bf New Assets}
    \item[] Question: Are new assets introduced in the paper well documented and is the documentation provided alongside the assets?
    \item[] Answer: \answerYes{}{} 
    \item[] Justification: The experiment environments have been described in detail in this paper. 
    \item[] Guidelines:
    \begin{itemize}
        \item The answer NA means that the paper does not release new assets.
        \item Researchers should communicate the details of the dataset/code/model as part of their submissions via structured templates. This includes details about training, license, limitations, etc. 
        \item The paper should discuss whether and how consent was obtained from people whose asset is used.
        \item At submission time, remember to anonymize your assets (if applicable). You can either create an anonymized URL or include an anonymized zip file.
    \end{itemize}

\item {\bf Crowdsourcing and Research with Human Subjects}
    \item[] Question: For crowdsourcing experiments and research with human subjects, does the paper include the full text of instructions given to participants and screenshots, if applicable, as well as details about compensation (if any)? 
    \item[] Answer: \answerNA{} 
    \item[] Justification: The paper does not involve crowdsourcing nor research with human subjects.
    \item[] Guidelines:
    \begin{itemize}
        \item The answer NA means that the paper does not involve crowdsourcing nor research with human subjects.
        \item Including this information in the supplemental material is fine, but if the main contribution of the paper involves human subjects, then as much detail as possible should be included in the main paper. 
        \item According to the NeurIPS Code of Ethics, workers involved in data collection, curation, or other labor should be paid at least the minimum wage in the country of the data collector. 
    \end{itemize}

\item {\bf Institutional Review Board (IRB) Approvals or Equivalent for Research with Human Subjects}
    \item[] Question: Does the paper describe potential risks incurred by study participants, whether such risks were disclosed to the subjects, and whether Institutional Review Board (IRB) approvals (or an equivalent approval/review based on the requirements of your country or institution) were obtained?
    \item[] Answer: \answerNA{} 
    \item[] Justification: the paper does not involve crowdsourcing nor research with human subjects.
    \item[] Guidelines:
    \begin{itemize}
        \item The answer NA means that the paper does not involve crowdsourcing nor research with human subjects.
        \item Depending on the country in which research is conducted, IRB approval (or equivalent) may be required for any human subjects research. If you obtained IRB approval, you should clearly state this in the paper. 
        \item We recognize that the procedures for this may vary significantly between institutions and locations, and we expect authors to adhere to the NeurIPS Code of Ethics and the guidelines for their institution. 
        \item For initial submissions, do not include any information that would break anonymity (if applicable), such as the institution conducting the review.
    \end{itemize}

\end{enumerate}
\end{document}